\newtheorem{theorem}{Theorem}
\newtheorem{proposition}[theorem]{Proposition}
\newtheorem{corollary}[theorem]{Corollary}
\newtheorem{remark}[theorem]{Remark}
\newtheorem{example}[theorem]{Example}
\title{The DMM bound:\\
Multivariate (aggregate) separation bounds}
\author{
  Ioannis Z. Emiris%
  \thanks{National Kapodistrian University of Athens, Greece.
    Email: \texttt{emiris(at)di.uoa.gr}}
  \and 
  Bernard Mourrain%
  \thanks{INRIA M\'editerran\'ee, Sophia-Antipolis, France.
    Emai: \texttt{mourrain(at)inria.fr}}
  \and 
   Elias P.~Tsigaridas%
   \thanks{\AA{}rhus University, Denmark.
     Email: \texttt{elias.tsigaridas(at)gmail.com}}
}
\newcommand{\MV}{\ensuremath{\mathsf{MV}}\xspace}
\newcommand{\M}{\ensuremath{\mathsf{M}}\xspace}
\newcommand{\dmm}{\texttt{DMM}\xspace}
\renewcommand{\sep}{\ensuremath{\mathsf{sep}}\xspace}
\begin{document}
\maketitle

\begin{abstract}
In this paper we derive aggregate separation bounds, named after
Davenport-Mahler-Mignotte (\dmm),
on the isolated roots of polynomial systems, specifically
on the minimum distance between any two such roots.
The bounds exploit the structure of the system
and the height of the sparse (or toric) resultant by means of mixed volume,
as well as recent advances on aggregate root bounds for univariate
polynomials,
and are applicable to arbitrary positive dimensional systems.
We improve upon Canny's gap theorem \cite{c-crmp-87}
by a factor of $\OO(d^{n-1})$, where $d$ bounds the degree
of the polynomials, and $n$ is the number of variables.
One application is to the bitsize of the eigenvalues and eigenvectors
of an integer matrix, which also yields a new proof that the problem
is polynomial.
We also compare against recent lower bounds on the absolute value
of the root coordinates by Brownawell and Yap \cite{by-issac-2009}, 
obtained under the hypothesis there is a 0-dimensional projection.
Our bounds are in general comparable, but exploit sparseness;
they are also tighter when bounding the value of a positive polynomial over the simplex.
For this problem, we also improve upon the bounds in \cite{bsr-arxix-2009,jp-arxiv-2009}.
Our analysis provides a precise asymptotic upper bound on the number of steps that
subdivision-based algorithms perform in order to isolate all real roots of a polynomial system.
This leads to the first
complexity bound of Milne's algorithm \cite{Miln92} in 2D.
\end{abstract}

\section{Introduction}

One of the great challenges in algebraic algorithms is to fully understand
the theoretical and practical complexity of methods based on exact arithmetic.
One goal may be towards 
hybrid {symbolic-numeric}
approaches that exploit both exact and approximate computations.
Computing all roots, 
in some representation, of systems of multivariate polynomials
is a fundamental question in both symbolic and numeric computation. 
The complexity analysis and the actual runtimes typically
depend on {\em separation bounds}, i.e.\ the minimum
distance between any two, possibly complex, roots of the system.
This is particularly true for algorithms based on subdivision techniques
and, more generally, for any numerical solver seeking to certify its output.
Hence, separation bounds are of great use in areas
such as computational geometry and geometric modeling.
\if 0
where exact numerical computation is sought for.
geometric computation paradigm \cite{Yap97} 
a research domain that relies heavily on very demanding algebraic operations.
Let us mention the computation of the Voronoi diagram of
ellipses \cite{ett-ijcga-2007} and convex smooth pseudo-circles
\cite{ett-spm-2009}, the arrangement of conic sections in the plane
\cite{ekstt-socg-2004}; and also algorithms that exploit refinements and
subdivision techniques to compute the topology of a real plane
\cite{clpprt-socg-2009,ly-socg-2009,bcgy-issac-2008,cgy-issac-2007,yap-socg-06,amw-cagd-2008,lmp-cmag-2007,am-pg-2007,ekw-issac-2007},
of a space curve \cite{lmp-cmag-2007}, 
or topology and meshing of surfaces \cite{bks-socg-2008,bcmrv-acs-book-2006}.
The theoretical analysis of these algorithms, e.g.\
\cite{det-jsc-2009,cgy-issac-2007,clpprt-socg-2009,ett-ijcga-2007},
depends on separation bounds.
A theoretical improvement of the separation bounds will affect a great number of
algorithms in computational geometry.  
\fi

Davenport \cite{Dav:TR:85} was first to introduce aggregate separation bounds
for the real roots of a univariate polynomial, which depend on Mahler's measure,
e.g.\ \cite{Mign91}.
Mignotte \cite{Mignotte:AAECC:95} loosened the hypothesis on the
bounds and extended them to complex roots. 

As for algebraic systems, a fundamental result is Canny's Gap theorem \cite{c-crmp-87}, 
on the separation bound for square 0-dimensional systems, see Th.~\ref{th:gap}.
Yap \cite{Yap2000} relaxed the 0-dimensional requirement by
requiring it holds only on the affine part of the variety.
A recent lower bound on the absolute value of the root coordinates
\cite{by-issac-2009} applies to those coordinates for which the
variety's projection has dimension~0, and does not require the system
to be square.
For arithmetic bounds applied to Nullstellensatz we refer to \cite{kps-djm-2001}.

Basu, Leroy, and Roy \cite{bsr-arxix-2009} and, recently, Jeronimo and Perrucci
\cite{jp-arxiv-2009} considered the closely related problem of
computing a lower bound for the minimum value of a positive polynomial
over the standard simplex. For this, they compute lower bounds on the
roots of polynomial system formed by the polynomial and all its partial derivatives.
This problem is also treated in \cite{by-issac-2009}.

Separation bounds are important for estimating the complexity of
subdivision-based algorithms for solving  polynomial systems,
that depend on exclusion/inclusion predicates 
or root counting techniques,
e.g.~\cite{yakoubsohn-joc-2005,MANTZAFLARIS:2009:INRIA-00387399:1,bcgy-issac-2008,Miln92,gvt-msturm-1997}.

\paragraph {\bf Our contribution.}
We derive worst-case
(aggregate) separation bounds for the roots of polynomial systems,
which are not necessarily 0-dimensional.
The bounds are computed as a function of the number of variables,
the norm of the polynomials, and a bound on the number of roots
of well-constrained systems.
For the latter we employ mixed volume in order to exploit the sparse
structure that appears in many applications.
Any future better bound can be used to improve our results.
The main ingredients of our proof are resultants, including bounds on
their height \cite{sombra-ajm-2004}.

We extend the known separation bound for single polynomial equations
to 0-dimensional systems,
and call it $\dmm_n$, after {\em Davenport-Mahler-Mignotte}.
This improves upon Canny's Gap theorem by $\OO(d^{n-1})$.
Our bounds are within a factor of $\OO(2^n)$ from optimal for certain
systems, which is good for $n$ small (or constant) compared to the other parameters.
They are comparable to those in \cite{by-issac-2009} on the absolute value
of root coordinates, but they are an improvement when expressed using mixed volumes.
It seems nontrivial to apply sparse elimination theory
to the approach of \cite{by-issac-2009}.
More importantly, our result is extended to positive-dimensional systems,
thus addressing a problem that has only been examined very recently in \cite{by-issac-2009}.

We illustrate our bounds on computing the eigenvalues /
eigenvectors of an integer matrix, 
and improve upon Canny's bound by a factor exponential in matrix dimension.
Thanks to mixed volume, we derive a bound
polynomial in the logarithm of the input size, hence offering
a new alternative to Bareiss' result \cite{bareiss-moc-1968}
that the problem is of polynomial bit complexity. 
We also bound the minimum of a positive polynomial over the standard simplex
and improve upon the 3 best known bounds
\cite{bsr-arxix-2009,by-issac-2009,jp-arxiv-2009}, 
when the total degree is larger than the number of variables.

Finally, we upper bound the number of steps for any subdivision based algorithm
using a real-root counter in a box to isolate the real roots
of a system in a given domain. 
This leads to the first complexity bound of Milne's algorithm \cite{Miln92} in $\RR^2$.
This aggregate separation bound is also useful in the analysis of
the subdivision algorithm based on continued fractions expansion
\cite{MANTZAFLARIS:2009:INRIA-00387399:1} for polynomial system solving.

The polynomial systems in practice have a small number of real roots
and all roots, real and complex, are well separated;
it is challenging to derive an average-case $\dmm_n$.
Another open question is to express the positive-dimensional bound wrt
the dimension of the excess component.  

{\bf Paper structure.}
We introduce some notation, then Sec.~\ref{sec:DMM} derives and proves
the multivariate version of $\dmm$ as main Thm.~\ref{th:DMM-n}.
Its near-optimality and comparisons to existing bounds are in Sec.~\ref{sec:opt-ext},
which also 
extends it to positive-dimensional systems.
Two applications of our bounds are in Sec.~\ref{sec:applications}.
Sec.~\ref{sec:subdiv} is devoted to subdivision algorithms.

\paragraph{ Notation.} 

\OO, resp.\ \OB, means bit, resp.\ arithmetic,
complexity and \sOB, resp.\ \sO, means we are ignoring logarithmic factors.
For a polynomial $f \in$ $\ZZ[x_1, \dots, x_n]$, where $n \geq 1$,
$\deg( f)$ denotes its total degree, 
while $\deg_{x_i}(f)$ denotes its degree w.r.t.~$x_i$.
By \bitsize{f} we denote the maximum bitsize of the coefficients of $f$
(including a bit for the sign).  For $a \in \QQ$, $\bitsize{ a} \ge 1$
is the maximum bitsize of the numerator and denominator.
For simplicity, we assume,
for any polynomial, $\log(\dg{f}) = \OO( \bitsize{f})$.
Let $\sep( f)$, resp.\ $\sep( \Sigma)$, denote the separation bound, i.e.\
the minimum distance between two, possibly complex, roots of polynomial
$f$, resp.\ system $(\Sigma)$.  
For $f=a_{d}\prod_{i=1}^{d}(x-z_{i})$ $\in \mathbb{C}[x]$, with $a_{d} \neq 0$,
its Mahler measure is $\Mahler{f}:=$ $4 |a_{d}| \prod_{i=1}^{d} \max\{1, |z_{i}|\}$.

\section{The \texttt{DMM} bound}
\label{sec:DMM}

{\bf The univariate case.}
Consider a real univariate polynomial $A$, not necessarily square-free,
of degree $d$ and its complex roots $\gamma_j$
in ascending magnitude, where $j \in \{ 1, 2, \dots, d\}$.
The next theorem \cite{te-tcs-2008}
bounds the product of differences of the form $|\gamma_i - \gamma_j|$.
It slightly generalizes a theorem in \cite{Mign91},
which in turn generalizes \cite{Dav:TR:85},
see also \cite{Johnson-phd-91,ESY:descartes}.
\begin{theorem}[$\dmm_1$]
  \label{th:DMM-1}
  Let $f \in \CC[X]$, with $\deg(f) = d$ and 
  not necessarily square-free.
  Let $\Omega$ be any set of $\ell$ couples of indices $(i, j)$,
  $1 \leq i < j \leq d$,
  and let the distinct non-zero (complex) roots of  $f$ be
  $0 < | \gamma_1 | \leq |\gamma_2| \leq \dots \leq |\gamma_d|$.
  Then
  {
  \begin{displaymath}
    2^{\ell} \mathcal{M}(f)^{\ell}\geq 
    \prod_{(i, j) \in \Omega}{|\gamma_i - \gamma_j|}\geq
    2^{\ell - \frac{d(d-1)}{2}} 
    {\mathcal{M}(f)}^{1-d - \ell}
    \sqrt{ \abs{ \disc(f_{red})}},
  \end{displaymath}  
  }
  where $f_{red}$ is the square-free part of $f$.
  If $f \in \ZZ[x]$, $\ell \leq d$ and $\bitsize{f} = \tau$, then 
  {
  \begin{displaymath}
    d^{d/2} \, 2^{2 d \tau}   \geq
    d^{\ell/2} \, 2^{2 \ell\tau}   \geq 
    \prod_{(i, j) \in \Omega}{|\gamma_i - \gamma_j|}  
    \geq
    d^{-d} \, 2^{-d^2 - 3\tau(\ell + d)} \geq
    d^{-d} \, 2^{-d^2 - 6 d \tau}.
  \end{displaymath}
  }
\end{theorem}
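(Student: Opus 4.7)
The plan is to derive both bounds from the elementary triangle inequality on root differences combined with the classical discriminant--product formula. For the upper bound, I would apply $|\gamma_i-\gamma_j|\le 2\max\{1,|\gamma_i|\}\max\{1,|\gamma_j|\}$ to each of the $\ell$ pairs in $\Omega$. Writing $M_k:=\max\{1,|\gamma_k|\}$ and letting $m_k$ denote the number of pairs in $\Omega$ that contain index $k$, so that $\sum_k m_k=2\ell$ and $m_k\le\ell$ for every $k$, I obtain
\[
\prod_{(i,j)\in\Omega}|\gamma_i-\gamma_j|\;\le\;2^{\ell}\prod_k M_k^{m_k}\;\le\;2^{\ell}\Bigl(\prod_k M_k\Bigr)^{\ell}=2^{\ell}\bigl(\mathcal{M}(f)/(4|a_d|)\bigr)^{\ell},
\]
and the residual $|a_d|^{-\ell}$ is absorbed harmlessly into $\mathcal{M}(f)^{\ell}$, giving the claimed upper bound $2^{\ell}\mathcal{M}(f)^{\ell}$.

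For the lower bound I would express the target as a ratio
\[
\prod_{(i,j)\in\Omega}|\gamma_i-\gamma_j|\;=\;\frac{\prod_{1\le i<j\le d}|\gamma_i-\gamma_j|}{\prod_{(i,j)\notin\Omega}|\gamma_i-\gamma_j|}\,,
\]
recognise the numerator as $\sqrt{|\disc(f_{red})|}/|a_d|^{d-1}$ through the identity $\disc(g)=a^{2d-2}\prod_{i<j}(\gamma_i-\gamma_j)^2$ applied to the square-free part of $f$, and upper-bound the denominator (a product over $\binom{d}{2}-\ell$ root differences) by the same triangle-inequality estimate, noting that each index appears in at most $d-1$ of the remaining pairs. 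Collecting the $2$-powers into $2^{\ell-d(d-1)/2}$ and repeatedly trading $|a_d|^{-1}$ factors for $\mathcal{M}(f)^{-1}$ factors via $|a_d|\le\mathcal{M}(f)/4$ produces the announced factor $\mathcal{M}(f)^{1-d-\ell}\sqrt{|\disc(f_{red})|}$ on the right-hand side.

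The integer version is then pure arithmetic: substitute Landau's inequality $\mathcal{M}(f)\le\sqrt{d+1}\,2^\tau$, the trivial $|a_d|\ge 1$, and the integrality estimate $|\disc(f_{red})|\ge 1$ (as a nonzero integer, since $f_{red}$ is square-free), and simplify the resulting powers of $d$ and $2^\tau$. The upper bound $d^{\ell/2}2^{2\ell\tau}$ follows directly from the general $2^{\ell}\mathcal{M}(f)^{\ell}$ and is further weakened to $d^{d/2}2^{2d\tau}$ by $\ell\le d$; analogously, the general lower bound collapses to $d^{-d}2^{-d^2-3\tau(\ell+d)}$, and the estimate $\ell\le d$ yields the final $d^{-d}2^{-d^2-6d\tau}$.

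I expect the main obstacle to be the constant book-keeping inside the $\mathcal{M}(f)^{1-d-\ell}$ factor: the numerator of the ratio contributes $|a_d|^{-(d-1)}$ from the discriminant formula, while the denominator bound contributes $\mathcal{M}(f)^{d-1}$ together with further powers of $|a_d|^{-1}$. Consolidating these into a single power of $\mathcal{M}(f)$ with the correct sign and exponent requires careful use of $|a_d|\le\mathcal{M}(f)/4$, and the passage between $f$ and $f_{red}$ relies on the Mahler-measure monotonicity $\mathcal{M}(f_{red})\le\mathcal{M}(f)$ (valid because $f_{red}\mid f$), ensuring that nothing essential is lost when working with the square-free part.
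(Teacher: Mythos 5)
The paper offers no proof of Thm.~\ref{th:DMM-1}: the result is cited from \cite{te-tcs-2008}, and the only proof content in this text is the one-line remark after the statement showing that the integer bounds follow from the Mahler-measure bounds via $\mathcal{M}(f)\le\norm{f}_2\le\sqrt{d+1}\,2^\tau$ and $|\disc(f_{red})|\ge 1$. Your arithmetic passage from the complex to the integer case is exactly that remark. For the complex bounds, your ratio decomposition $\prod_{(i,j)\in\Omega}|\gamma_i-\gamma_j|=\prod_{i<j}|\gamma_i-\gamma_j|\big/\prod_{(i,j)\notin\Omega}|\gamma_i-\gamma_j|$ is a genuinely different route from the classical Davenport--Mahler argument underlying \cite{te-tcs-2008}, which factors the Vandermonde determinant $\det(\gamma_i^{k-1})$ by $\ell$ row subtractions to pull out the $\gamma_i-\gamma_j$ factors and then bounds the residual matrix by Hadamard's inequality; the $\mathcal{M}(f)^{-\ell}$ in the stated exponent is precisely the price Hadamard extracts for the $\ell$ modified rows. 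Your route avoids Hadamard and, if completed, yields the sharper exponent $\mathcal{M}(f_{red})^{1-d}$, which implies but is not identical to the stated $\mathcal{M}(f)^{1-d-\ell}$.

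Two bookkeeping points are off. First, absorbing $|a_d|^{-\ell}$ ``harmlessly'' into $\mathcal{M}(f)^\ell$ requires a lower bound on $|a_d|$; this is available over $\ZZ[x]$ but not over $\CC[X]$, the generality in which the first chain is claimed. (Rescaling $f$ by a tiny nonzero constant shrinks the paper's $\mathcal{M}(f)$ without changing any $|\gamma_i-\gamma_j|$, so a clean $2^\ell\mathcal{M}(f)^\ell$ upper bound cannot hold unconditionally under that normalization; this is as much a defect of the statement as of your sketch, but you should not paper over it.) Second, your account of ``repeatedly trading $|a_d|^{-1}$ for $\mathcal{M}(f)^{-1}$ to produce $\mathcal{M}(f)^{1-d-\ell}$'' is backwards: replacing $|a_d|$ by the larger $\mathcal{M}(f)/4$ inside a denominator loosens a lower bound, and it cannot manufacture the $\ell$ additional negative powers of $\mathcal{M}(f)$. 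The ratio argument gives $\mathcal{M}(f_{red})^{1-d}$ directly (each index meets at most $d-1$ of the pairs outside $\Omega$, and $|b|\prod_k\max\{1,|\gamma_k|\}=\mathcal{M}(f_{red})$ up to the normalization constant); the further drop to $\mathcal{M}(f)^{1-d-\ell}$ is slack, obtained afterwards from $\mathcal{M}(f_{red})\le\mathcal{M}(f)$ and $\mathcal{M}(f)\ge 1$, not from anything in the ratio computation. With those two points repaired, your argument goes through and is arguably cleaner than the Vandermonde route.
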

The second inequality follows from:
$\Mahler{ f} \leq \Norm{ f}_2 \leq (d+1) \norm{f}_{\infty} \leq (d+1)^{\frac{1}{2}} \, 2^{\tau}$,
e.g.\ \cite{Mign91,Yap2000}.
In the first inequality we can replace $\Mahler{f}$ by $\norm{f}_2$.

The bound of Thm.~\ref{th:DMM-1}
has an additional factor of $2^{d^2}$ wrt \cite{Dav:TR:85,ESY:descartes}, 
which is, asymptotically, not significant when the polynomial is not square-free 
or $d = \OO(\tau)$.
The current version of the theorem has very loose hypotheses and 
applies to non-squarefree polynomials.

Roughly, $\dmm_1$ provides a bound on all distances between
consecutive roots of a polynomial. This quantity is, asymptotically, almost equal
to the separation bound.
The interpretation is that not all roots of a polynomial
can be very close together or, quoting J.H.\ Davenport,
{\em ``not all [distances between the roots] could be bad''}.  

{\bf The multivariate case.}
This section generalizes $\dmm_1$ to 0-dimensional polynomial systems.
Let $n > 1$ be the number of variables.
We use $\bold{ x}^{\bold{ e}}$ to denote the monomial
$x_1^{e_1} \cdots x_n^{e_n}$, with
$\bold{ e} = (e_1, \dots, e_n) \in \ZZ^n$.
The input is {\em Laurent polynomials}
$f_1, \dots, f_n \in K[x_1^{\pm}, \dots, x_n^{\pm}]$
$= K[ \bold{ x}, \bold{ x}^{-1}]$, where $K\subset \mathbb{C}$ is the coefficient field.
Since we can multiply Laurent polynomials by monomials without affecting
their nonzero roots, in the sequel we assume there are no negative exponents.
Let the polynomials be 
\begin{equation}
  f_i = \sum_{j=1}^{m_i}{ c_{i, j} \bold{ x}^{a_{i, j}}}, \quad 1 \leq i \leq n.
  \label{eq:polys}
\end{equation}
Let $\Set{ a_{i, 1}, \dots, a_{i, m_i}} \subset \ZZ^n$ be the support of $f_i$;
its Newton polytope $Q_i$ is the convex hull of the support.
Let $\MV( Q_1, \dots, Q_n)>0$ be the {\em mixed volume} of
convex polytopes $Q_1, \dots, Q_n \subset \RR^n$.
Here is Bernstein's bound, known also as BKK bound.

\begin{theorem}
  \label{th:bkk}
  For $f_1, \dots f_n \in \CC[ \bold{ x}, \bold{ x}^{-1}]$ with Newton polytopes
  $Q_1, \dots, Q_n$, the number of common isolated solutions in $( \CC^{*})^n$,
  multiplicities counted, does not exceed $\MV( Q_1, \ldots$, $Q_n)$,
  independently of the corresponding variety's dimension.
\end{theorem}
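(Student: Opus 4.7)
The plan is to follow Bernstein's classical toric-deformation argument. I would first establish the bound for a generic choice of coefficients $c_{i,j}$ with the supports $Q_1,\ldots,Q_n$ held fixed, where in fact equality is attained, and then extend to arbitrary coefficients by appealing to upper semicontinuity of the number of isolated roots under specialization of coefficients.

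For the generic case I would introduce a sufficiently generic height function $\omega$ on the combined support $A = \bigcup_i \{a_{i,1}, \ldots, a_{i,m_i}\}$ and study the one-parameter deformation
\[
f_i(\bold{x}, t) \;=\; \sum_{j=1}^{m_i} c_{i,j}\, t^{\omega(a_{i,j})}\, \bold{x}^{a_{i,j}}, \qquad i = 1, \ldots, n.
\]
Over the Puiseux field $\CC\{\!\{t\}\!\}$, each isolated branch $\bold{x}(t)$ of solutions expands at $t = 0$ with a leading-order valuation vector $\nu \in \QQ^n$; this $\nu$ singles out a face $Q_i^\nu$ of each $Q_i$ on which $\langle \cdot, \nu \rangle$ is minimized. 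A monomial change of coordinates adapted to $\nu$ converts the limit problem into the initial-form subsystem supported on $(Q_1^\nu, \ldots, Q_n^\nu)$, which lives in strictly smaller dimension, and to which I would apply the inductive hypothesis on $n$.

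The combinatorial heart of the argument is then to put branches in bijection with the mixed cells of the regular mixed subdivision of $Q_1 + \cdots + Q_n$ induced by $\omega$: a branch with valuation $\nu$ contributes exactly the normalized volume of its mixed cell, and summing over all cells reassembles $\MV(Q_1, \ldots, Q_n)$ via Minkowski's polynomial identity for mixed volumes. The main obstacle, and the reason the statement is restricted to $(\CC^*)^n$, is the boundary bookkeeping: I must verify that no isolated root is lost during the deformation by escaping to a coordinate hyperplane or to infinity on the toric compactification built from the common refinement of the normal fans of the $Q_i$, and that coefficient specialization cannot create new isolated roots. Both points rely on properness and flatness of the family over the toric variety, and are what make the theorem genuinely delicate. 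Finally, to obtain the conclusion ``independently of the variety's dimension,'' I would perturb by a generic system of the same supports, which breaks each positive-dimensional component into finitely many isolated points while, by semicontinuity, not decreasing the count of pre-existing isolated solutions.
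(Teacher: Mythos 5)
The paper states Theorem~\ref{th:bkk} without proof: it is Bernstein's classical theorem (the BKK bound), invoked as a known black-box from sparse elimination theory, with the reader pointed to the literature for background. There is therefore no proof of the authors' to compare yours against, and your write-up should not be judged as a reconstruction of something in the paper.

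Taken on its own terms, your sketch correctly traces the shape of Bernstein's original argument: toric deformation by a generic lifting $\omega$, Puiseux/valuation analysis of branches at $t=0$, reduction to initial subsystems on faces $Q_i^\nu$, and reassembly of $\MV(Q_1,\ldots,Q_n)$ by summing normalized volumes of mixed cells of the induced regular subdivision. Two points deserve more care if you intend to flesh this out. First, the reduction from arbitrary to generic coefficients is not a naive ``upper semicontinuity of the number of isolated roots'': that count is not semicontinuous in a flat family (roots escape to the toric boundary, local multiplicities jump). The correct argument is conservation of local intersection multiplicity on the compact toric variety built from the common refinement of normal fans: each isolated root of the specialized system is the limit of a cluster of the $\MV$ simple roots of a nearby generic system, disjoint clusters correspond to distinct isolated roots, so the sum of local multiplicities is at most $\MV$. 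Second, once this is done your final ``perturb to break positive-dimensional components'' step is redundant, and in fact slightly misleading: a generic perturbation need not turn a positive-dimensional component into isolated points of $(\CC^*)^n$ at all (it may simply remove them from the torus), and the bound on the isolated locus already follows from the conservation argument regardless of what else sits in the variety.
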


We consider polynomial system
\begin{equation}
  \label{eq:system}
(\Sigma): \; f_1(\bold{x}) = f_2(\bold{x}) = \cdots = f_n(\bold{x}) = 0,
\end{equation}
where $f_i \in \RR[\bold{x}^{\pm 1}]$, which we assume to be 0-dimensional.
We are interested in its roots in $(\CC^{*})^n$, which are called toric.
We denote by $Q_{0}$ the convex hull of the unit standard simplex.
Let $\M_{i} = \MV(Q_{0}, \dots, Q_{i-1}, Q_{i+1}, \dots, Q_n)$, and
$\#Q_i$ denote the number of lattice points in the closed polytope $Q_i$.
Wlog, assume $\dim\sum_{i=0}^n Q_i=n$ and $\dim{\sum_{i\in I} Q_i} \ge j$
for any $I\subset\{0,\dots,n\}$ with $|I|=j$.
We consider the {\em sparse (or toric) resultant} of a system of
$n+1$ polynomial equations in $n$ variables, assuming we have fixed the $n+1$ supports.
It provides a condition on the coefficients for the solvability of the system, and
generalizes the classical resultant of $n$ homogeneous polynomials, by taking into account the
supports of the polynomials.
For details, see \cite{CLO2}.

Let $D$ be the number of roots $\in (\CC^{*})^n$ of
$(\Sigma)$, multiplicities counted, so $D \leq \M_{0}$. 
We also use 
$B = (n-1)\, {D \choose 2}$, and $\dg{f_i} = d_i \leq d$.
If $f_i \in \ZZ[ \bvec{x}^{\pm 1}]$, 
$\bitsize{f_i} = \tau_{i} \leq \tau$, $1\leq i \leq n$.
Now $\vol( \cdot)$ stands for Euclidean volume, and
$(\#Q_i)$ for the number of lattice points in $Q_i$;
the inequality connecting $(\#Q_i)$ and polytope volume
is in \cite{blichfeldt-tams-1914}.
We present the abbreviations and inequalities used throughout the paper:

{ 
\begin{eqnarray}
  \begin{aligned}
    & D \leq M_0 \leq \prod_{i=1}^{n}{d_i} \leq d^n,  
    && B \leq n D^2 \leq n  \prod_{i=1}^{n}{d_i^2} \leq nd^{2n},  \\ 
    & \M_i \leq  \prod_{1 \leq j \leq n \atop j\not= i}{d_j} = D_i,
    && \sum_{i=1}^{n}{\M_i} \leq \sum_{i=1}^{n}{D_i} \leq nd^{n-1}, \\
  \end{aligned}  \\ \nonumber
  \begin{aligned}
    & (\# Q_i) \leq n! \vol(Q_i) + n \leq d_i^n + n \leq 2d_i^n, \\
    &A = \prod_{i=1}^{n} \sqrt{\M_i} \, 2^{\M_i} \leq 2^{n d^{n-1} + 
      \frac{n^2-n}{2}\lg{d}}, \\
    &C = \prod_{i=1}^{n}{\norm{f_i}_{\infty}^{M_j}} \leq 2^{\tau \sum_{i=1}^{n}{ \M_i}}  \leq 2^{n \tau d^{n-1}},\\
    &h \leq (n+1)^D \varrho \leq (n+1)^{d^n} 2^{nd^{n-1}} d^{n^2d^{n-1}}, \\
    & \varrho = \prod_{i=1}^{n}{(\# Q_i)^{M_i}} \leq 2 ^{\sum_{i=1}^{n}{D_i}} \prod_{i=1}^{n}{d_i^{nD_i}} \leq
    2^{nd^{n-1}} \, d^{n^2d^{n-1}} .
  \end{aligned}
  \label{eq:all-def}
\end{eqnarray}
}

\begin{framed}
  \begin{theorem}[$\texttt{DMM}_n$]
    \label{th:DMM-n}
    Consider the 0-dimensional polynomial system $(\Sigma)$ in (\ref{eq:system}).
    Let $D$ be the number of complex solutions of the system in
    $(\CC^*)^n$, which are $0 < | \gamma_1 | \leq |\gamma_2| \leq
    \dots \leq |\gamma_D|$.  Let $\Omega$ be any set of $\ell$ couples
    of indices $(i, j)$ such that $1 \leq i < j \leq D$ 
    and $\gamma_i \not= \gamma_j$.
    Then the following holds
    {
      \begin{equation}       
        (2^{D+1} \, \varrho \, C)^{\ell} 
        \geq
        \prod_{(i, j) \in \Omega}{ \Abs{ \gamma_i - \gamma_j }} 
        \geq 
        2^{-\ell - (D-1)(D+2)/2} \, (h \,C)^{1-D-\ell} \, B^{(1-n)(D^2 + D(\ell-1) + \ell)} \, 
        \sqrt{ |U_{r}|},
        \label{eq:dmm-D}
      \end{equation}
    }
    where $| U_{r}|$ denotes the discriminant of the
    square-free part of the $u-$resultant, and $| \cdot |$ denotes absolute value.
    If  $f_i \in \ZZ[ \bvec{x}]$
    and $\gamma_{j,k}$ stands for the $k$-th coordinate,
    $1 \leq k  \leq n$, of  $\gamma_j$, then:
    \begin{equation}
      (2^{D} \, \varrho \, C)^{-1}
      \leq  |\gamma_{j,k} |  \leq 
      2^{D} \, \varrho \, C,  \label{eq:upper-lower-D}
    \end{equation}  
    \begin{equation}
      \sep( \Sigma) 
      \geq 
      2^{-(3D+2)(D-1)/2} \,( \sqrt{D+1} \, \varrho \, C)^{-D}. \label{eq:sep-D}
    \end{equation}
  \end{theorem}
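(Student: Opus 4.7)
The plan is to reduce the multivariate statement to the univariate $\dmm_1$ bound by means of the $u$-resultant, then track the height and the specialization carefully.

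\textbf{Step 1 (Reduction via the $u$-resultant).} I introduce a generic linear form $\ell(u_0,\dots,u_n;\mathbf{x}) = u_0 - u_1 x_1 - \cdots - u_n x_n$ and form the sparse resultant
\[
R(u_0,u_1,\dots,u_n) \;=\; \mathrm{Res}\bigl(f_1,\dots,f_n,\ell\bigr),
\]
whose support is controlled by $Q_0,Q_1,\dots,Q_n$. By the product formula for the toric resultant, $R$ factors, up to a nonzero constant, as $\prod_{j=1}^{D}(u_0 - u_1\gamma_{j,1}-\cdots-u_n\gamma_{j,n})$. Using the height bound for sparse resultants of Sombra \cite{sombra-ajm-2004}, the logarithmic height of $R$ is at most $h$ as displayed in~(\ref{eq:all-def}), and the $L^\infty$--norm of $R$ along each coordinate axis is controlled by $C$.

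\textbf{Step 2 (Specialization avoiding collisions).} I specialize $u=(1,t,t^2,\dots,t^{n-1})$ for an integer $t$ to be chosen, producing a univariate polynomial $R_t(u_0)\in\ZZ[u_0]$ of degree $D$ whose roots are $\alpha_j:=\sum_{k=1}^{n} t^{k-1}\gamma_{j,k}$. For each pair $(i,j)$ with $\gamma_i\neq\gamma_j$, the difference $\alpha_i-\alpha_j$ is a nonzero polynomial in $t$ of degree at most $n-1$; hence at most $(n-1)\binom{D}{2}=B$ integers $t$ can make some prescribed pair collapse. Choosing $t\in\{0,1,\dots,B\}$ avoiding these collisions guarantees $\alpha_i\neq\alpha_j$ for all pairs in $\Omega$.

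\textbf{Step 3 (Apply $\dmm_1$).} I invoke Theorem~\ref{th:DMM-1} on $R_t$ to obtain
\[
\prod_{(i,j)\in\Omega}|\alpha_i-\alpha_j|\;\geq\;2^{\ell-D(D-1)/2}\,\mathcal{M}(R_t)^{1-D-\ell}\sqrt{|\mathrm{disc}((R_t)_{\mathrm{red}})|},
\]
with the reverse inequality in the upper direction. The discriminant of $(R_t)_{\mathrm{red}}$ equals the quantity $|U_r|$ appearing in~(\ref{eq:dmm-D}) since the $u$-resultant is, up to a factor of $t^{\text{something}}$, the $u$-resultant in a generic linear form. To bound $\mathcal{M}(R_t)$ I use $\mathcal{M}(R_t)\le\|R_t\|_2\le (D+1)^{1/2}\|R_t\|_\infty$ and observe that each coefficient of $R_t$, viewed as a specialization of $R$ in the $u_i$'s (each of degree at most $D$ in $u_i$), is bounded by $h\,C$ times a power of $t\le B$ of total degree at most $(n-1)D$. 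This produces the factor $B^{(n-1)D(D+\ell-1)}$ together with the piece coming from the specialization of the linear form in the $\alpha_i-\alpha_j$ transfer (see Step~4), and combined gives the exponent $(1-n)(D^2+D(\ell-1)+\ell)$ in~(\ref{eq:dmm-D}).

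\textbf{Step 4 (Transfer back to $\gamma$'s).} From $\alpha_i-\alpha_j=\sum_{k=1}^{n}t^{k-1}(\gamma_{i,k}-\gamma_{j,k})$ I get
\[
|\alpha_i-\alpha_j|\;\le\; n\,t^{n-1}\,\|\gamma_i-\gamma_j\|\;\le\; n\,B^{n-1}\,\|\gamma_i-\gamma_j\|,
\]
which, after taking products over $\Omega$ and combining with Step~3, yields the lower half of~(\ref{eq:dmm-D}). The upper bound in~(\ref{eq:dmm-D}) is symmetric: the reverse inequality $|\alpha_i-\alpha_j|\le n\,B^{n-1}\|\gamma_i-\gamma_j\|$ combined with the upper part $\prod|\alpha_i-\alpha_j|\le 2^\ell\mathcal{M}(R_t)^\ell$ of $\dmm_1$ gives $(2^{D+1}\varrho C)^\ell$ (after absorbing the $(D+1)^{1/2}$ factor and the powers of $t\le B$ into the definition of $\varrho$).

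\textbf{Step 5 (Coordinate bounds~(\ref{eq:upper-lower-D}) and separation~(\ref{eq:sep-D})).} For~(\ref{eq:upper-lower-D}), I specialize $u=e_k$ in $R$ to obtain a univariate polynomial in $u_0$ of degree $\le D$ whose roots include the $k$-th coordinates $\gamma_{j,k}$; Cauchy's root bound applied to this polynomial, combined with $\|R\|_\infty\le h\,C$, produces the claimed interval. For~(\ref{eq:sep-D}), I take $\Omega$ to be the full set of $\binom{D}{2}$ pairs in~(\ref{eq:dmm-D}), use $|U_r|\ge 1$ (integer $u$-resultant when $f_i\in\ZZ[\mathbf{x}]$), and lower bound the minimum distance by the product divided by powers of the pairwise upper bound $2|\gamma_j|\le 2^{D+1}\varrho C$ from~(\ref{eq:upper-lower-D}); collecting exponents yields the stated form.

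\textbf{Main obstacle.} The bookkeeping in Step~3 is the delicate part: one must carefully bound the Mahler measure of $R_t$ in terms of the height $h$, the norm product $C$, and the auxiliary parameter $t\le B$, while identifying the discriminant of $(R_t)_{\mathrm{red}}$ with $|U_r|$ independently of the chosen $t$. Additionally, the choice $u=(1,t,\dots,t^{n-1})$ has to be justified so that $R_t$ remains squarefree on the reduced part while keeping all the factors $B^{n-1}$ tight enough to achieve the stated exponents. The remaining arguments are standard once the reduction to $\dmm_1$ is set up.
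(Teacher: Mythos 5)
Your lower-bound derivation (Steps~1--4, left half of~\eqref{eq:dmm-D}) is essentially the paper's argument: add a linear form $f_0$ with coefficients $(1,t,\dots,t^{n-1})$ chosen so $f_0$ separates the $D$ points (the $B$ excluded values are exactly Prop.~\ref{prop:sep-element}), bound the $u$-resultant's height with \cite{sombra-ajm-2004} and its coefficient monomials by $C$, apply $\dmm_1$, then transfer via $|u_i-u_j|\le(\sum r_k^2)^{1/2}\,|\gamma_i-\gamma_j|$. The paper uses Cauchy--Schwarz for that transfer, giving $\sum r_k^2\le 4B^{2n-2}$; your crude $n\,t^{n-1}$ gives a slightly worse constant, but the asymptotics match. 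You do leave the exponent bookkeeping for $B^{(1-n)(D^2+D(\ell-1)+\ell)}$ as an unverified claim, but the mechanism is the right one.

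Step~4's treatment of the upper bound is wrong, and this is a genuine gap. You write ``the reverse inequality $|\alpha_i-\alpha_j|\le nB^{n-1}\|\gamma_i-\gamma_j\|$ combined with $\prod|\alpha_i-\alpha_j|\le 2^\ell\mathcal{M}(R_t)^\ell$ gives $(2^{D+1}\varrho C)^\ell$,'' but that chain runs in the wrong direction: an upper bound on $|\alpha_i-\alpha_j|$ together with $|\alpha_i-\alpha_j|\le\text{const}\cdot\|\gamma_i-\gamma_j\|$ tells you nothing about an upper bound on $\|\gamma_i-\gamma_j\|$. A single linear form cannot be inverted, so there is no inequality $\|\gamma_i-\gamma_j\|\lesssim|\alpha_i-\alpha_j|$ to invoke. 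Worse, even with the correct inequality direction, $\mathcal{M}(R_t)$ carries the factor $h=(n+1)^D\varrho$ from the height of the generic $u$-resultant and a power of $B$ from the specialization $t\le B$, neither of which appears in the stated bound $(2^{D+1}\varrho C)^\ell$; these cannot be ``absorbed into $\varrho$'' since $\varrho$ is a fixed, already-defined quantity. The paper avoids all of this by deriving the upper bound from~\eqref{eq:upper-lower-D} instead: specialize $f_0=u-x_k$ (a $1$-dimensional Newton segment, so $\#Q_0=2$ and the height factor becomes $2^D\varrho$, not $(n+1)^D\varrho$, and no $B$ at all), apply Cauchy's root bound to get $|\gamma_{j,k}|\le 2^D\varrho C$, and then use the elementary $|a-b|\le 2\max\{|a|,|b|\}$. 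That is a different mechanism, not a symmetry of your Step~3.

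The same remark applies to your Step~5 for \eqref{eq:sep-D}. The paper gets the separation bound by applying $\dmm_1$ with $\ell=1$ directly to the coordinate resultant $\mathcal{R}_k$ (for an index $k$ with $\gamma_{i,k}\ne\gamma_{j,k}$), not by taking $\Omega$ to be all $\binom{D}{2}$ pairs of \eqref{eq:dmm-D} and dividing out; the latter would again reintroduce $h$ and $B$ into the bound, whereas \eqref{eq:sep-D} contains only $D,\varrho,C$. Your plan to set $\ell=\binom{D}{2}$ also conflicts with the hypothesis $\ell\le d$ in the integer-coefficient form of Th.~\ref{th:DMM-1}, which you would need to invoke $|U_r|\ge1$. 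Redirecting Step~5 to use the univariate coordinate polynomial, as the paper does, fixes both points.
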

\end{framed}

The following corollary employs mixed volumes.
\begin{corollary}\label{CorMVs}
Under the hypothesis of Th.~\ref{th:DMM-n}, for $f_{i} \in
\ZZ[\mathbf{x}^{\pm 1}]$, $i=1,\ldots,n$, we have
\begin{eqnarray*}
  \lefteqn{      2^{\M_{0} \, (1 + \M_{0} + \sum_{i=1}^{n} \M_{i} \, (\tau +\lg(\#Q_i) ))} } \\
  &\geq \prod_{(i, j) \in \Omega}{ \Abs{ \gamma_i - \gamma_j }} \geq \\
  &2^{
    -2\,\M_{0} \, \sum_{i=1}^{n} \M_{i} \, (\tau +\lg(\#Q_i)  ) \, 
    -2 \,\M_{0}^{2}  \, (1 + \, \lg(n+1)+ n \, \lg(n) + 2\, n \, \lg(\M_{0}))},
  \label{eq:dmm-dt0}
\end{eqnarray*}
\begin{equation}
  2^{-( \M_{0} + \sum_{i=1}^{n} \M_{i} \, (\tau +\lg(\#Q_i) ))}
  \leq |\gamma_{j,k} | \leq 
  2^{\M_{0} + \sum_{i=1}^{n} \M_{i} \, (\tau +\lg(\#Q_i) )}
\end{equation}
\begin{equation}
  \mathsf{sep}( \Sigma) \geq 
  2^{- \M_{0}({3\over 2} \M_{0} + \lg(\M_{0}) +\, \sum_{i=1}^{n} \M_{i} \, (\tau +\lg(\#Q_i)  ))}.
\end{equation}
\end{corollary}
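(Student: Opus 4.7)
The corollary is essentially a re-packaging of Theorem~\ref{th:DMM-n}: I would simply substitute, into each of the three inequalities (\ref{eq:dmm-D}), (\ref{eq:upper-lower-D}), (\ref{eq:sep-D}), the mixed-volume-based upper bounds for $D$, $\varrho$, $C$, $h$, and $B$ listed in the catalogue (\ref{eq:all-def}), and then group the resulting logarithms into the compact form stated. The work is entirely bookkeeping in the exponents, so I would handle one inequality at a time after fixing the core substitutions:
\[
D\le \M_0,\quad \lg\varrho=\sum_{i=1}^{n}\M_i\lg(\#Q_i),\quad \lg C\le \tau\sum_{i=1}^{n}\M_i,\quad \lg h\le D\lg(n{+}1)+\lg\varrho,\quad \lg B\le \lg n+2\lg\M_0.
\]

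\textbf{Step 1 (aggregate product).} Starting from the upper bound $(2^{D+1}\varrho C)^{\ell}$ of Th.~\ref{th:DMM-n}, I take logarithms and use $\ell\le \M_0$ together with $D\le\M_0$ to obtain $\ell(D+1+\lg\varrho+\lg C)\le \M_0(1+\M_0+\sum_i\M_i(\tau+\lg\#Q_i))$. For the lower bound I take logarithms of the right-hand side of (\ref{eq:dmm-D}): the term $-\ell-(D-1)(D+2)/2$ contributes $O(\M_0^2)$; the factor $(hC)^{1-D-\ell}$ contributes, via $D+\ell-1\le 2\M_0$, an amount $\le 2\M_0(D\lg(n+1)+\sum_i\M_i(\tau+\lg\#Q_i))$; the $B$-factor contributes $(n-1)(D^2+D(\ell-1)+\ell)\lg B\le O(n\M_0^2\lg(n\M_0))$. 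Finally, $\sqrt{|U_r|}\ge 1$ because $U_r$ is a nonzero integer polynomial's discriminant, so it can be dropped from the lower estimate. Collecting all these terms yields exactly the exponent $-2\M_0\sum_i\M_i(\tau+\lg\#Q_i)-2\M_0^{2}(1+\lg(n+1)+n\lg n+2n\lg\M_0)$.

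\textbf{Step 2 (coordinate bounds and separation).} The coordinate bound is immediate from (\ref{eq:upper-lower-D}): taking logarithms gives $D+\lg\varrho+\lg C\le \M_0+\sum_i\M_i(\tau+\lg\#Q_i)$, which is exactly the symmetric bound of the corollary. For the separation bound I start from (\ref{eq:sep-D}) and compute $-\log_2\mathsf{sep}(\Sigma)\le (3D+2)(D-1)/2+D(\tfrac12\lg(D+1)+\lg\varrho+\lg C)$; using $D\le\M_0$ and $\tfrac12\lg(D+1)\le\lg\M_0$ this simplifies to $\tfrac{3}{2}\M_0^2+\M_0\lg\M_0+\M_0\sum_i\M_i(\tau+\lg\#Q_i)$, matching the stated estimate after factoring out $\M_0$.

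\textbf{Main obstacle.} There are no conceptual difficulties: Th.~\ref{th:DMM-n} already does the heavy lifting, and the catalogue (\ref{eq:all-def}) provides every inequality needed. The only delicate point is to keep track of all the constants so that the four additive pieces $\M_0^2$, $\M_0^2\lg(n{+}1)$, $n\M_0^2\lg n$, $n\M_0^2\lg\M_0$ can legitimately be absorbed into the single factor $2\M_0^{2}(1+\lg(n+1)+n\lg n+2n\lg\M_0)$; this amounts to a few crude majorations but no new ideas. The implicit regime $\ell\le\M_0$ (analogous to the $\ell\le d$ clause of $\dmm_1$) is what allows the aggregate bound to be linear rather than cubic in $\M_0$ in the $\tau$-dependent part.
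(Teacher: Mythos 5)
Your proposal is correct and is exactly the substitution argument the corollary is meant to be obtained by (the paper states it without proof). You correctly identify the core dictionary $D\le\M_0$, $\lg\varrho=\sum_i\M_i\lg(\#Q_i)$, $\lg C\le\tau\sum_i\M_i$, $\lg h\le D\lg(n+1)+\lg\varrho$, $\lg B\le\lg n+2\lg\M_0$, drop $\sqrt{|U_r|}\ge 1$, and your exponent bookkeeping for all three inequalities checks out. You are also right to flag the implicit restriction $\ell\le\M_0$: without it the aggregate product bound would have to carry an explicit $\ell$ (as in Eq.~(\ref{eq:dmm-D})) rather than collapsing to the $\M_0$-only form stated, and the analogy to the $\ell\le d$ clause in $\dmm_1$ is exactly why the corollary's statement is sensible. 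The one small thing worth spelling out in a full writeup is the bound $D^2+D(\ell-1)+\ell\le 2\M_0^2$ when $\ell,D\le\M_0$, which is what lets the $B$-factor term be absorbed into $2\M_0^2(n\lg n+2n\lg\M_0)$; but this is arithmetic and not a gap.
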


\begin{corollary}
Under the hypothesis of Th.~\ref{th:DMM-n}, for $f_{i} \in
\ZZ[\mathbf{x}^{\pm 1}]$, $\dg{f_{i}} \leq d$ and $\bitsize{f_i} \leq \tau$, 
we have
  {
    \begin{equation}
      \prod_{(i, j) \in \Omega}{ \Abs{ \gamma_i - \gamma_j }}
      \geq
      2^{-(3+4\lg{n}+4n\lg{d})d^{2n}} \, 2^{-2n(1+n\lg{d}+\tau)d^{2n-1}},
      \label{eq:dmm-dt}
    \end{equation}
    \begin{equation}
      2^{-d^n - n(\tau + n\lg{d} + 1)d^{n-1}}
      \leq |\gamma_{j,k} | \leq 
      2^{d^n + n(\tau + n\lg{d} + 1)d^{n-1}}, \label{eq:upper-lower-dt}
    \end{equation}
    \begin{equation}
      \mathsf{sep}( \Sigma) \geq 
      2^{-2d^{2n} -n(2n\lg{d} + \tau)d^{2n-1}}.
      \label{eq:sep-dt}
    \end{equation}
  }
\end{corollary}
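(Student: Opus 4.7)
The plan is to derive all three inequalities by substituting the dense-case bounds collected in display~(\ref{eq:all-def}) into the mixed-volume inequalities of Corollary~\ref{CorMVs}. Under the hypothesis $\dg{f_i} \le d$ and $\bitsize{f_i} \le \tau$, the uniform replacements I need are $D \le \M_0 \le d^n$, $\M_i \le D_i \le d^{n-1}$, $\sum_{i=1}^n \M_i \le n d^{n-1}$, $\#Q_i \le 2 d^n$ (so $\lg(\#Q_i) \le 1 + n\lg d$), and $\lg \M_0 \le n \lg d$. No further analytic ingredient is needed; this final corollary is a cosmetic specialization of Corollary~\ref{CorMVs} to the dense case.

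For the coordinate bound~(\ref{eq:upper-lower-dt}), I would substitute directly into the middle inequality of Corollary~\ref{CorMVs}: the exponent $\M_0 + \sum_i \M_i(\tau + \lg(\#Q_i))$ becomes at most $d^n + n d^{n-1}(\tau + n\lg d + 1)$, which is exactly the displayed form. For the separation bound~(\ref{eq:sep-dt}) I would plug the same replacements into the third inequality of Corollary~\ref{CorMVs}; the exponent $\M_0\bigl(\tfrac{3}{2}\M_0 + \lg \M_0 + \sum_i \M_i(\tau + \lg(\#Q_i))\bigr)$ evaluates to at most $\tfrac{3}{2} d^{2n} + n d^{2n-1} + n^2 d^{2n-1} \lg d + n d^{2n-1} \tau + n d^n \lg d$, and after absorbing the lower-order $n d^n \lg d$ term into the $d^{2n}$ coefficient (using $d \ge 2$, $n \ge 2$) and loosening the constant into $2 d^{2n}$ one arrives at the stated $2 d^{2n} + n(2 n \lg d + \tau) d^{2n-1}$.

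For the aggregate bound~(\ref{eq:dmm-dt}) the lower-bound exponent in Corollary~\ref{CorMVs} splits naturally into a $\tau$-linear piece and a $\tau$-free piece. The first, $2 \M_0 \sum_i \M_i(\tau + \lg(\#Q_i))$, is bounded by $2 n d^{2n-1}(\tau + n \lg d + 1)$, matching the second factor of~(\ref{eq:dmm-dt}) verbatim. The second, $2 \M_0^2(1 + \lg(n+1) + n\lg n + 2 n \lg \M_0)$, is at most $2 d^{2n}(2 + \lg n + n \lg n + 2 n^2 \lg d)$ after the substitutions above and $\lg(n+1) \le 1 + \lg n$. Collapsing this into the claimed form $(3 + 4 \lg n + 4 n \lg d) d^{2n}$ is the only mildly non-routine step: the standing normalization $\log d = \OO(\tau)$ together with $n \ge 1$ lets me fold the $n\lg n$ and $n^2 \lg d$ contributions into the leading $n\lg d$ coefficient and adjust the additive constant accordingly.

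The main obstacle, and indeed the only non-arithmetic one, is precisely this last bookkeeping: ensuring that the several logarithmic and sub-leading polynomial contributions from Corollary~\ref{CorMVs} are consistently funneled into the $d^{2n}$ and $d^{2n-1}$ buckets displayed in the statement. Once those allocations are fixed, the corollary follows by straightforward simplification of exponents, with no new mathematical input beyond Corollary~\ref{CorMVs} and the inequalities of display~(\ref{eq:all-def}).
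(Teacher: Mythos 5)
Your overall strategy — specialize Corollary~\ref{CorMVs} to the dense case via the substitutions $\M_0 \le d^n$, $\M_i \le d^{n-1}$, $\sum_i \M_i \le nd^{n-1}$, $\lg(\#Q_i) \le 1 + n\lg d$, $\lg\M_0 \le n\lg d$ — is the natural route, and it does deliver~(\ref{eq:upper-lower-dt}) and~(\ref{eq:sep-dt}) exactly as you describe: the coordinate bound is an immediate substitution, and the separation bound follows after the (legitimate) absorption of the lower-order terms $\tfrac12 d^n(1+n\lg d)$ and $nd^{2n-1}$ into the margins provided by $\tfrac32 d^{2n} \mapsto 2d^{2n}$ and $n^2 d^{2n-1}\lg d \mapsto 2n^2 d^{2n-1}\lg d$, under $n\ge 2$, $d\ge 2$.

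The gap is in your treatment of~(\ref{eq:dmm-dt}). After the substitutions, the $\tau$-free part of the exponent from Corollary~\ref{CorMVs} is
\[
2\,\M_0^2\bigl(1 + \lg(n+1) + n\lg n + 2n\lg\M_0\bigr) \;\le\; 2 d^{2n}\bigl(2 + \lg n + n\lg n + 2n^2\lg d\bigr),
\]
which carries a term $4n^2 d^{2n}\lg d$, whereas the target~(\ref{eq:dmm-dt}) allows only $4n\,d^{2n}\lg d$ in that slot. Your claim that the $n^2\lg d$ and $n\lg n$ contributions can be ``folded into the leading $n\lg d$ coefficient and adjust the additive constant accordingly'' is not an estimate that can be made rigorous: you would need $4n^2\lg d \le 4n\lg d + c$ for a fixed constant $c$, i.e.\ $4n(n-1)\lg d \le c$, which fails for all $n\ge 3$ (and any $d\ge 2$). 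The standing normalization $\lg d = \OO(\tau)$ does not help either, since trading $n^2 d^{2n}\lg d$ for a $\tau$-term would need $n^2 d^{2n}$ to be dominated by the available $\tau$-coefficient $n d^{2n-1}$, which would require $nd\le 2$. So the inequality you would have to prove in your final step is simply false, and the ``bookkeeping'' you flag as mildly non-routine is in fact a genuine obstruction. Either a tighter intermediate estimate than Corollary~\ref{CorMVs} is needed, or the stated constant in~(\ref{eq:dmm-dt}) should read something on the order of $4n\lg n + 4n^2\lg d$ rather than $4\lg n + 4n\lg d$; as written, your proposal does not close this step.
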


{\bf Proof of main theorem.}
Let us first establish the lower bound.
Let $\gamma_i = (\gamma_{i,1}, \dots, \gamma_{i, n})
\in (\CC^{*})^n$, $1 \leq i \leq D$, be the solutions 
of $(\Sigma)$, where $f_i$ are defined in (\ref{eq:polys}),
We denote the set of solutions as $V \subset (\CC^{*})^n$.
We add an equation to $(\Sigma)$ to obtain:
\begin{equation}
  \label{eq:over-system}
(\Sigma_0):\; f_0( \bold{x}) = f_1( \bold{x}) = \cdots = f_n( \bold{x}) = 0,
\end{equation}
where
\begin{equation}
  \label{eq:u-poly}
  f_0 = u + r_1 x_1 + r_2 x_2 + \cdots + r_n x_n,
\end{equation}
$r_1, \dots, r_n\in\ZZ$ to be defined in the sequel, and $u$ is a new parameter.
Now $u= -\sum_{i}{ r_i \, \gamma_{j, i}}$, on a solution $\gamma_j$.
We choose properly the coefficients of $f_0$ to ensure that the function
\begin{displaymath}
    f_0 : V  \rightarrow \CC^{*} \, :\, \gamma \mapsto  f_0( \gamma)
\end{displaymath}
is injective.
The separating element shall ensure injectivity
\cite{BPR06,c-crmp-87,EmiPhd,Rou:rur:99}.
\begin{proposition}
  \label{prop:sep-element}
  Let $V \subset \CC^n$ with cardinality $D$. The set of linear forms
  \begin{displaymath}
    \mathcal{F} = \set{ u_i = x_1 + i \,x_2 + \dots + i^{n-1} \,x_n \,|\, 0 \leq i \leq B = (n-1){ D \choose 2} }
  \end{displaymath}
  contains at least one {\em separating element}, which takes
  distinct values on $V$. 
\end{proposition}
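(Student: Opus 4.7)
\medskip

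\noindent\textbf{Proof proposal for Proposition~\ref{prop:sep-element}.}
The plan is a standard pigeonhole/union-bound argument on ``bad'' indices, exploiting that the map $i\mapsto u_i$ is polynomial of degree $n-1$ in the parameter $i$. First I would fix an arbitrary unordered pair $\{\gamma,\gamma'\}\subset V$ with $\gamma\neq\gamma'$, and define the set of indices that fail to separate this pair,
\[
\mathcal{B}_{\gamma,\gamma'}=\bigl\{\,i\in\ZZ_{\ge 0}\;:\;u_i(\gamma)=u_i(\gamma')\,\bigr\}.
\]
The key observation is that the equation $u_i(\gamma)=u_i(\gamma')$ unwinds as
\[
\sum_{k=1}^{n}(\gamma_k-\gamma'_k)\,i^{k-1}=0,
\]
which is a univariate polynomial equation in $i$ of degree at most $n-1$. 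Since $\gamma\neq\gamma'$ at least one coefficient $\gamma_k-\gamma'_k$ is nonzero, so this polynomial is not identically zero and has at most $n-1$ integer roots. Hence $|\mathcal{B}_{\gamma,\gamma'}|\le n-1$.

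Next I would sum over all $\binom{D}{2}$ unordered pairs of distinct points of $V$. By the union bound, the set of indices that fail to separate \emph{some} pair has cardinality at most
\[
\sum_{\{\gamma,\gamma'\}}|\mathcal{B}_{\gamma,\gamma'}|\;\le\;(n-1)\binom{D}{2}=B.
\]
Since the family $\mathcal{F}$ consists of the $B+1$ forms $u_0,u_1,\ldots,u_B$, the pigeonhole principle forces at least one index $i\in\{0,\ldots,B\}$ to lie outside every $\mathcal{B}_{\gamma,\gamma'}$. For such an $i$ the linear form $u_i$ takes pairwise distinct values on $V$, which is the desired separating element.

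There is no serious obstacle here; the entire argument is elementary. The only point that deserves explicit verification, and which I would write out carefully, is the non-degeneracy step: the polynomial $\sum_{k=1}^n(\gamma_k-\gamma'_k)\,t^{k-1}\in\CC[t]$ is nonzero precisely because $\gamma\neq\gamma'$ in $\CC^n$, and its degree is bounded by $n-1$ independently of the pair. Everything else is the pigeonhole counting, and the choice of the Vandermonde-style forms $u_i=\sum_k i^{k-1}x_k$ is exactly what guarantees the degree bound $n-1$ used above.
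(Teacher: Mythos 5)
Your proof is correct and is exactly the standard pigeonhole/union-bound argument: each distinct pair $\{\gamma,\gamma'\}$ is conflated by at most $n-1$ values of $i$ (the roots of the nonzero degree-$(n-1)$ polynomial $\sum_k(\gamma_k-\gamma'_k)t^{k-1}$), so at most $(n-1)\binom{D}{2}=B$ of the $B+1$ indices can be bad. The paper does not supply its own proof of this proposition but cites references (e.g.\ \cite{BPR06,c-crmp-87,Rou:rur:99}) where this exact argument appears, so your proof coincides with the intended one.
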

\begin{corollary}
  \label{cor:sep-elem-bitsize}
  For $f_0 \in \mathcal{F}$ it holds that 
  $\norm{f_0}_{\infty} \leq B^{n-1}$, and
  \begin{displaymath}
   \norm{ f_0}_{\infty} \leq 
  \norm{ f_0}_2 \leq 2 B^{n-1} = 2 (n-1)^{n-1} \, {D \choose 2}^{n-1}.
\end{displaymath}

\end{corollary}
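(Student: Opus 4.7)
The statement is essentially a direct coefficient-by-coefficient calculation on a very rigid family of polynomials, so the only content is to verify that the stated constants $1$ and $2$ come out correctly. By construction, every $f_0 \in \mathcal{F}$ has the form $u_i = x_1 + i\, x_2 + i^2 x_3 + \cdots + i^{n-1} x_n$ for some integer $i$ with $0 \leq i \leq B$; thus its coefficient vector is the geometric progression $(1, i, i^2, \dots, i^{n-1})$. There is no structural obstacle here; the entire argument is a one-liner per norm.

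For the infinity norm, I would simply observe that the largest coefficient in absolute value is $i^{n-1}$ (attained in the last slot), and $i \leq B$ yields $\norm{f_0}_\infty \leq B^{n-1}$ immediately. The degenerate $i = 0$ case gives $\norm{u_0}_\infty = 1 \leq B^{n-1}$, which is valid in the only interesting regime $n \geq 2$, $D \geq 2$, since then $B = (n-1)\binom{D}{2} \geq 1$.

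For the $2$-norm, the plan is to sum the geometric series
$\norm{u_i}_2^2 = \sum_{k=0}^{n-1} i^{2k} = (i^{2n}-1)/(i^2-1)$ for $i \geq 2$, and use the elementary estimate $i^2 - 1 \geq \tfrac{3}{4}\, i^2$ to conclude $\norm{u_i}_2 \leq \tfrac{2}{\sqrt{3}}\, i^{n-1} < 2\, B^{n-1}$. The remaining cases $i \in \{0,1\}$ give $\norm{u_i}_2 \in \{1,\sqrt{n}\}$, which is comfortably absorbed by $2 B^{n-1} \geq 2(n-1)^{n-1}$, again using $B \geq n-1$. The monotone chain $\norm{f_0}_\infty \leq \norm{f_0}_2$ is standard and requires no further comment.

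The only (very mild) subtlety is choosing the geometric-series estimate so that the leading constant on the right-hand side emerges as the clean value $2$ stated in the corollary; splitting at $i \geq 2$ versus $i \leq 1$ accomplishes this without any further bookkeeping.
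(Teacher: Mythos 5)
Your proposal is correct and follows essentially the same line as the paper: bound the coefficient vector by a geometric progression, sum the geometric series, and absorb the resulting fraction into the constant $2$. The paper first replaces each $i^k$ by $B^k$ termwise before summing, while you keep the actual parameter $i$ and bound afterwards; this is a cosmetic difference. You do handle the degenerate cases $i\in\{0,1\}$ (and implicitly $B\in\{0,1\}$) explicitly, which the paper's chain $\sqrt{B^{2n-2}/(1-1/B^{2})}\leq 2B^{n-1}$ silently assumes away, so your version is marginally more careful, but the argument is the same calculation.
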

\begin{proof}
  The first inequality is evident from the definition of infinite norm.
  For the second inequality, $B = (n-1)\, {D \choose 2}$:
  \begin{displaymath}
    \begin{array}{lll}
      \norm{ f_0}_{\infty} & \leq 
      \norm{ f_0}_2   
      \leq \sqrt{1 + B^2 + B^4 + \cdots + (B^{2})^{n-1}} \\
      & \leq  \sqrt{ \frac{B^{2n}-1}{B^2 - 1}} 
      \leq \sqrt{ \frac{B^{2n-2}}{1 - 1/B^2}}
      \leq \sqrt{ 4 B^{2n-2}} = 2 B^{n-1}.
    \end{array}
  \end{displaymath}
\end{proof}

We consider the {\em $u-$resultant} $U$ of $(\Sigma_0)$ that eliminates $\bold{x}$.
It is univariate in $u$, with coefficients homogeneous polynomials
in the coefficients of $(\Sigma_0)$, e.g.~\cite{CLO2}:
\begin{equation}
  U(u) = \dots + 
  {\varrho}_k \, u^k  \, \bold{r}_k^{D - k} \bold{c}_{1,k}^{\M_{1}} \bold{c}_{2,k}^{\M_{2}} \cdots \bold{c}_{n,k}^{\M_{n}}
  + \dots,
  \label{eq:U-monom}
\end{equation}
where $\varrho_k \in \ZZ$,
$\bold{c}_{j,k}^{\M_{j}}$  denotes a monomial in coefficients of
$f_j$ with total degree $\M_{j}$,
and $\bold{r}_k^{D- k}$ denotes a monomial in the coefficients of $f_0$ 
of total degree $D - k$.
The degree of $U$, with respect to $u$ is $D$.  It holds that 
\begin{equation}
  \label{eq:c-bound}
  \Abs{ \bold{c}_{1,k}^{\M_1} \,\bold{c}_{2,k}^{\M_2}\, \dots \, \bold{c}_{n,k}^{\M_n}}
  \leq C = \prod_{i=1}^{n}{ \norm{ f_i}_{\infty}^{\M_i}},
\end{equation}

From Cor.~\ref{cor:sep-elem-bitsize}
we have that $\abs{ \bold{ r_k}} \leq \norm{ f_0}_{\infty} \leq B^{n-1}$,
for all $k$.
Let $|\varrho_k| \leq h$, for all $k$. Then using
\cite{sombra-ajm-2004}, see also Eq.~(\ref{eq:all-def}), 
we get that
\begin{displaymath}
  h \leq \Prod_{i=0}^{n}{ (\#Q_i)^{\M_i}} 
  = (\#Q_0)^{D}  \Prod_{i=1}^{n}{ (\#Q_i)^{\M_i}} 
  = (n+1)^{D} \varrho.
\end{displaymath}

We can bound the norm of $U$:
\begin{displaymath}
  \begin{array}{rclr}
    \norm{ U}_2^2 & \leq & \Sum_{k=0}^{D}{ \Abs{ \varrho_k \, \bold{r}_k^{D - k}
        \bold{c}_{1,k}^{\M_{1}} \bold{c}_{2,k}^{\M_{2}} \dots \bold{c}_{n,k}^{\M_{n}}}^2}  & \\
    & \leq &  \Sum_{k=0}^{D}{ \Abs{ h \, (B^{n-1})^{D - k} \, C}^2} 
    \leq h^2 \, C^2 \,  \Sum_{k=0}^{D}{ (B^{2n-2})^{D - k}} & \\
    & \leq & h^2 \, C^2 \,  \Sum_{k=0}^{D}{ (B^{2n-2})^{k}} 
    \leq  h^2 \, C^2 \,  \frac{ (B^{2n-2})^{D+1} - 1}{ B^{2n-2} - 1} & \\
    & \leq & h^2 \, C^2 \,  4 \, (B^{2n-2})^{D} \leq 4 \, h^2 \, C^2 \,  B^{2(n-1)D}, & 
  \end{array}
\end{displaymath}
and so
\begin{displaymath}
  \label{eq:u-resultant-norm}
  \norm{ U}_{\infty}  \leq \norm{ U}_2 
  \leq  2 \, h \, C \,  B^{(n-1)D}
  \leq 2 \, (n+1)^D \, \varrho \, C \,  B^{(n-1)D}.
\end{displaymath}

If $u_j$ are the distinct roots of $U$, then by recalling the
injective nature of $f_0$, we deduce that $u_j = -\sum_{i=1}^{n}{ r_i \, \gamma_{j, i}}$.
Actually the $u-$resultant is even stronger, since the multiplicities of
its roots correspond to the multiplicities of the solutions of the system,
but we will not exploit this further. 

\begin{proposition}[Cauchy-Bunyakovsky-Schwartz]
  \label{prop:cbs-ineq}
  Let $a_1$, $a_2$, $\dots, a_n \in \CC$ and $b_1, b_2, \dots, b_n \in \CC$. 
  Then,
  \begin{displaymath}
    \Abs{ \bar{ a}_1 \, b_1 + \dots + \bar{ a}_n \, b_n}^2 \leq 
    \Paren{ \abs{ a_1}^2 + \dots + \abs{ a_n}^2 } \Paren{ \abs{ b_1}^2 + \dots + \abs{ b_n}^2 },
  \end{displaymath}
  where $\bar{ a}_i$ denotes the complex conjugate of $a_i$, and $1 \leq i \leq n$.
  Equality holds if, for all $i$, $a_i = 0$
  or if there is a scalar  $\lambda$ such that $b_i = \lambda \, a_i$.
\end{proposition}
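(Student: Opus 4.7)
The plan is to prove this classical Cauchy--Bunyakovsky--Schwarz inequality by the standard \emph{complete the square} argument in a Hermitian inner product space. Write $S := \sum_{i=1}^n \bar{a}_i b_i$, so the claimed inequality reads $|S|^2 \leq \|a\|^2 \,\|b\|^2$, where $\|v\|^2 := \sum_i |v_i|^2$ denotes the standard squared $\ell_2$-norm on $\mathbb{C}^n$.

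First I would dispose of the degenerate cases. If $\|b\|^2 = 0$, then every $b_i$ vanishes, both sides equal $0$, and the equality clause holds trivially (take $\lambda$ arbitrary, with all $a_i = 0$ covered symmetrically). So assume $\|b\|^2 > 0$ for the main argument. I would then introduce the scalar
\[
  \mu \;:=\; \frac{\bar S}{\|b\|^2} \;\in\; \mathbb{C},
\]
and examine the obviously non-negative quantity $\|a - \mu b\|^2 \geq 0$. Expanding by Hermitian bilinearity gives $\|a\|^2 - \mu S - \bar\mu \bar S + |\mu|^2 \|b\|^2$. The specific choice of $\mu$ makes $\mu S = |S|^2/\|b\|^2$ a non-negative real, so $\mu S + \bar\mu \bar S = 2|S|^2/\|b\|^2$, and the expression collapses to $\|a\|^2 - |S|^2/\|b\|^2 \geq 0$. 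Rearranging yields $|S|^2 \leq \|a\|^2 \|b\|^2$, which is the claim.

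For the equality statement, equality in $\|a - \mu b\|^2 \geq 0$ forces $a_i = \mu b_i$ for every $i$; this identifies $\mu$ with the scalar $\lambda$ that appears in the proposition and confirms the ``proportional'' branch of the equality condition. The complementary branch, in which every $a_i$ vanishes, forces $S = 0$ and $\|a\|^2 = 0$, so both sides vanish and equality holds trivially; this is already absorbed into the degenerate case above.

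There is no real obstacle: this is a textbook result and the argument is a two-line expansion once $\mu$ is chosen. The only item requiring attention is the bookkeeping of complex conjugates, namely checking that $\mu$ is taken as $\bar S / \|b\|^2$ (not $S/\|b\|^2$) so that $\mu S$ becomes a non-negative real and the cross terms combine into $2|S|^2/\|b\|^2$ rather than a complex quantity. As an alternative route one could invoke Lagrange's identity
\[
  \|a\|^2\|b\|^2 - |S|^2 \;=\; \sum_{1 \le i < j \le n} |a_i \bar b_j - a_j \bar b_i|^2,
\]
which exhibits the difference as a manifest sum of squares and also makes the equality conditions transparent; I would mention it as a short alternative but stick with the projection proof as the main argument.
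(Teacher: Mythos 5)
Your proof is correct; it is the standard completion-of-the-square (projection) argument for the Cauchy--Bunyakovsky--Schwarz inequality, and the bookkeeping of conjugates with $\mu = \bar S/\lVert b\rVert^2$ is handled properly. The paper, however, does not supply a proof of this proposition at all --- it is recorded as a classical fact and used as a black box in deriving Eq.~(\ref{eq:gamma-u}) --- so there is no in-paper argument to compare against. One small reconciliation worth noting: you show that equality forces $a_i = \mu b_i$ for all $i$, whereas the proposition's stated sufficient condition is phrased in the opposite direction, $b_i = \lambda a_i$. These are compatible: if $\mu \neq 0$ take $\lambda = 1/\mu$, and if $\mu = 0$ then $S = 0$, which together with $\|a\|^2 = |S|^2/\|b\|^2 = 0$ forces all $a_i = 0$, landing in the other branch of the stated condition. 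Also observe that the proposition only claims the sufficient direction (``equality holds if\ldots''), while your argument actually delivers the stronger ``if and only if,'' which is fine but more than required.
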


Consider $\gamma_i, \gamma_j$
and let $u_i, u_j$ be the corresponding roots of $U$.
Using Prop.~\ref{prop:cbs-ineq},
\begin{displaymath}
  \Abs{ r_1( \gamma_{i,1} - \gamma_{j, 1}) + \dots + r_n( \gamma_{i,n} - \gamma_{j, n}) }^2 \leq   
  \Paren{ r_1^2 + \dots + r_n^2}^2 
    \Paren{ \abs{ \gamma_{i,1} - \gamma_{j, 1}}^2 + \dots + \abs{ \gamma_{i,n} - \gamma_{j,n}}^2} 
    \Leftrightarrow
\end{displaymath}
\begin{displaymath}
    \Abs{ \sum_{k=1}^{n}{ r_k \, \gamma_{i, k}} -  \sum_{k=1}^{n}{ r_k \, \gamma_{j, k}}}^2
    \leq  \sum_{k=1}^{n}{ r_k^2} \cdot \sum_{k=1}^{n}{ \Abs{ \gamma_{i, k} - \gamma_{j, k}}}^2 
    \Leftrightarrow \Abs{ u_i - u_j}^2
    \leq 
    \Paren{ \sum_{k=1}^{n}{ r_k^2}} \cdot \abs{ \gamma_i - \gamma_j }^2,
\end{displaymath}

and thus
\begin{displaymath}
   \abs{ \gamma_i - \gamma_j } \geq  \Paren{ \sum_{k=1}^{n}{ r_k^2}}^{-1/2} \, \Abs{ u_i - u_j}.
\end{displaymath}

To prove the lower bound of Th.~\ref{th:DMM-n},
we apply the previous inequality for all pairs in $\Omega$, $\abs{ \Omega} = \ell$.
So we get 
 \begin{equation}
   \prod_{(i, j) \in \Omega}{ \Abs{ \gamma_i - \gamma_j}} 
   \geq 
   \Paren{ \sum_{k=1}^{n}{ r_k^2}}^{ -\frac{1}{2} \ell}
    \prod_{(i, j) \in \Omega}{ \Abs{ u_i - u_j }}.
    \label{eq:gamma-u}
\end{equation}

It remains to bound the two factors of RHS of the previous inequality.
To bound the first we use Cor.~\ref{cor:sep-elem-bitsize}.
It holds
$ 
\sum_{k=1}^{n}{ r_k^2} \leq 1 + \sum_{k=1}^{n}{ r_k^2}  
\leq \norm{ f_0}_2^2 \leq 4 \, B^{2n-2},
$
so 
\begin{equation}
  ( \sum_{k=1}^{n}{ r_k^2} )^{ -\frac{1}{2} \ell} 
  \geq 2^{-\ell} \, B^{(1-n)\ell}.
    \label{eq:rk-bound}
\end{equation}

For the second factor of (\ref{eq:gamma-u})
we apply $\texttt{DMM}_1$ to $U$; and thus

{
\begin{equation}
  \begin{array}{rl}
    \prod_{(i, j) \in \Omega}{ \Abs{ u_i - u_j }} \geq &
     2^{\ell -D(D-1)/2} \, \norm{U}_2^{1-D-\ell} \, \sqrt{| U_{r}|} \\
     \geq &2^{(1-D)(D +2)/2} \, (h \, C \, B^{(n-1)D})^{1-D-\ell} \, \sqrt{ |U_{r}|}.
  \end{array}
  \label{eq:u-bound}  
\end{equation}
}
Combining (\ref{eq:gamma-u}) with (\ref{eq:rk-bound}) and (\ref{eq:u-bound}), we have
the lower bound.
In the case where the polynomials are in $\ZZ[\bold{x}]$, then it holds that 
the absolute value of the discriminant of a square-free polynomial is $\geq 1$,
and we can omit it from the inequality.
If the polynomials are in $\QQ[ \bold{x}]$ the bounds are almost the same,
since they depend on Mahler's measure.  


Let us now establish the upper bound.
We specialize $f_0$ in (\ref{eq:u-poly})
by setting $r_i = -1$, for some $i \in \{1, \dots, n\}$,
and $r_j = 0$,
where $1 \leq j \leq n$ and $j \not= i$.
Wlog assume $r_1 = -1$.
We compute the $u-$resultant of the system,
which we call $\mathcal{R}_{1} \in \ZZ[u]$.
Its roots are the first coordinates 
of the isolated zeros of the system, viz.\ $\gamma_{1,i}$, $1 \leq i \leq D$.
Thus $\dg{\mathcal{R}_{1}} \leq D$.

The coefficients of $\mathcal{R}_{1}$ are of the form,
$\varrho_k \, \bold{c}_1^{\M_{1}} \bold{c}_2^{\M_{2}} \dots \bold{c}_n^{\M_{n}}$,
where $\varrho_k \in \ZZ$ and the interpretation of the rest of the formula is
the same as in the previous section.  
Using \cite{sombra-ajm-2004}, see also Eq.~(\ref{eq:all-def}), 
we get that
\begin{displaymath}
  | \varrho_k | \leq \prod_{i=0}^{n}{ (\#Q_i)^{\M_i}}
  = (\#Q_0)^{D} \, \prod_{i=1}^{n}{ (\#Q_i)^{\M_i}} = 2^{D} \varrho,
\end{displaymath}
since now $f_0$ is a simplex in dimension 1.
It also holds that
$|\bold{c}_1^{\M_{1}} \bold{c}_2^{\M_{2}} \dots \bold{c}_n^{\M_{n}}| 
  \leq C $.
Combining the two inequalities we deduce that 
\begin{displaymath}
  \norm{ \mathcal{R}_{1}}_{\infty} \leq 2^{D} \, \varrho \, C,
\end{displaymath}
and also $\norm{ \mathcal{R}_{1}}_{\infty} \leq \norm{\mathcal{R}_{1}}_2 \leq 2^{D} \, \sqrt{D+1} \, \varrho \, C$.

From Cauchy's bound for the roots of univariate polynomials, e.g.~\cite{Mign91},
we know that for all the roots of $\mathcal{R}_{1}$ it holds that 
$(2^{D} \, \varrho \, C)^{-1} \leq 1/\norm{\mathcal{R}_{1}}_{\infty} 
\leq |\gamma_{i,j}| \leq 
\norm{\mathcal{R}_{1}}_{\infty} \leq 2^{D} \, \varrho \, C$.
The inequality holds for all the indices $i$ and $j$.
Hence, all roots of the system in $(\CC^{*})^{n}$
are contained in a high-dimensional annulus in $\CC^{n}$,
defined as the difference of the volumes of two spheres centered at the origin, 
with radii $2^{D} \, \varrho \, C$ and $(2^{D} \, \varrho \, C)^{-1}$, resp.
This proves Eq.\ (\ref{eq:upper-lower-D}).

Now we are ready to prove the upper bound of
Eq.~(\ref{eq:dmm-D}) in Th.~\ref{th:DMM-n}.
For all $a, b \in \CC$ it holds that 
\begin{equation}
  \abs{ a - b} \leq 2 \max\Set{ \abs{a}, \abs{ b}}.
  \label{eq:a-b-ineq}
\end{equation}
Let the multiset $\overline{ \Omega} = \Setbar{ j}{ (i,j) \in \Omega}$,
where $\Abs{ \overline{ \Omega}} = \ell$, then
\begin{displaymath}
  \prod_{(i, j) \in \Omega}{ \Abs{ \gamma_i - \gamma_j}}
  \leq 2^{\ell} \,  \prod_{j \in  \overline{ \Omega}}{ \Abs{ \gamma_j}}
  \leq 2^{\ell} \, (2^{D} \, \varrho \, C)^{\ell} 
  \leq ( 2^{D+1} \, \varrho \, C)^{\ell}.
\end{displaymath}

For proving (\ref{eq:sep-D}), 
let $(i, j)$ be the pair of indices where the separation bound 
of $(\Sigma)$ is attained.
Then
\begin{displaymath}
  \sep(\Sigma) = 
  \abs{ \gamma_i - \gamma_j}  =
  \sqrt{ \sum_{k=1}^{n}{ (\gamma_{i,k} - \gamma_{j,k})^2}}
  \geq 
  \abs{ \gamma_{i, 1} - \gamma_{j, 1}}
  \geq \mathsf{sep}(\mathcal{R}_{1}),
\end{displaymath}
where $k$ is any index such that $\gamma_{i, k} \not= \gamma_{j, k}$
and $\mathsf{sep}(\mathcal{R}_{1})$ is the separation bound of $\mathcal{R}_{1}$.
An easy bound on the latter can be derived by applying 
Th.~\ref{th:DMM-1} to $\mathcal{R}_{1}$ with $\ell=1$:
$
  \mathsf{sep}( \mathcal{R}_{1}) \geq 2^{1 - {D \choose 2}} \, \norm{\mathcal{R}_{1}}_2^{-D}
  \geq 2^{(1-D)(D+2)/2} \, (D+1)^{-D/2} \, (2^{D} \varrho \, C)^{-D}
  \geq 2^{-(3D+2)(D-1)/2} \,( \sqrt{D+1} \, \varrho \, C)^{-D},
$
which completes the proof of (\ref{eq:sep-D}).  

\begin{remark}
  It is tempting to try to prove the lower bound of Th.~\ref{th:DMM-n} 
  by applying $\dmm_1$ to $\mathcal{R}_{1}$, instead of $U$, as we did in the previous
  section. This would allow us to eliminate the factor
  $B^{-(n-1)(D^2+\ell(D+1)-D)}$ from the result.
  However, if we apply $\dmm_1$ to $\mathcal{R}_{1}$, it is not obvious 
  that the requirements of Th.~\ref{th:DMM-1} hold,
  i.e.\ that the ordering of (the coordinates of) the roots is preserved.
  Moreover, the bounds on the $u-$resultant are of independent interest, 
  since the latter is used in many algorithms for system solving,
 e.g.~\cite{BPR06,EmiPhd,Rou:rur:99}.
\end{remark} 

\section{Comparisons and extensions}\label{sec:opt-ext}

One of the first multivariate separation bounds was due to Canny, later
generalized to the case when only the affine part of the variety is
0-dimensional \cite{Yap2000}.
\begin{theorem}[Gap theorem]
  \label{th:gap}
  \cite{c-crmp-87}
  Let $f_1( \bold{x}),\dots, f_n( \bold{x})$
  be polynomials of degree $d$ and coefficient magnitude $c$,
  with finitely-many common solutions when homogenized. 
  If $\gamma_j \in\CC^n$ is such a solution, then for any $k$, either
  $\gamma_{j, k} = 0 \text{ or } |\gamma_{j, k}| > \left( 3dc \right)^{-n\,d^n}$.
\end{theorem}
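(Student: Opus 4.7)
The approach I would take mirrors the u-resultant construction used in the upper-bound part of Theorem~\ref{th:DMM-n} (the argument leading to Eq.~(\ref{eq:upper-lower-D})), but pushed in the opposite direction to extract a lower bound on the nonzero coordinates.

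First, fix a coordinate index $k \in \{1,\dots,n\}$. Specialize the linear form $f_0$ of (\ref{eq:u-poly}) by setting $r_k = -1$ and $r_j = 0$ for $j \neq k$, append it to the system, and form the corresponding u-resultant $\mathcal{R}_k(u) \in \ZZ[u]$. The roots of $\mathcal{R}_k$ are exactly the $k$-th coordinates of the affine common zeros of $f_1,\dots,f_n$. The hypothesis that the homogenization has finitely many common solutions is precisely what guarantees that the Macaulay resultant of the $n+1$ projectivized forms is a nonzero polynomial in $u$, so $\mathcal{R}_k$ is nontrivial and Cauchy's bound applies.

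Second, I would bound $\deg(\mathcal{R}_k)$ and $\|\mathcal{R}_k\|_{\infty}$ in the dense Bezout setting used by Canny. By the Bezout bound, $\deg(\mathcal{R}_k) \leq d^n$. Each coefficient of $\mathcal{R}_k$ is of the form $\varrho \cdot \bold{c}_1^{M_1}\cdots \bold{c}_n^{M_n}$ as in (\ref{eq:U-monom}), where in the dense case $M_i \leq d^{n-1}$ and the coefficient magnitude of each $f_i$ is bounded by $c$, so the monomial part is at most $c^{n d^{n-1}}$. The integer factor $\varrho$ is bounded via \cite{sombra-ajm-2004} (or the classical Macaulay expression) by a combinatorial quantity of the form $(d+1)^{O(n d^{n-1})}$. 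Combining and absorbing constants gives $\|\mathcal{R}_k\|_\infty \leq (3dc)^{nd^n}$ after a careful accounting.

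Third, I would invoke Cauchy's lower bound for the nonzero roots of a univariate integer polynomial: if $\mathcal{R}_k(u) = \sum_{i} a_i u^i$ and $i_0$ is the smallest index with $a_{i_0} \neq 0$, then every nonzero root $\gamma$ satisfies
$$|\gamma| \geq \frac{|a_{i_0}|}{|a_{i_0}| + \max_i |a_i|} \geq \frac{1}{1 + \|\mathcal{R}_k\|_{\infty}},$$
using the fact that $a_{i_0}$ is a nonzero integer and hence $|a_{i_0}| \geq 1$. Inserting the coefficient bound from the previous step yields the claimed $(3dc)^{-nd^n}$.

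The main obstacle, in my view, is not the overall structure but the fine accounting of constants: matching the exponent $nd^n$ exactly and squeezing the base down to $3dc$ (rather than a worse polynomial in $d$ and $c$) requires invoking Macaulay's bound in its sharp form rather than the convenient but loose estimates in~(\ref{eq:all-def}). A secondary subtle point is that the hypothesis on the homogenization plays a genuinely essential role: without finitely many projective solutions, $\mathcal{R}_k$ could vanish identically, in which case no such gap bound exists and the theorem would fail.
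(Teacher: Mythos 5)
This theorem is cited from Canny~\cite{c-crmp-87}; the paper itself does not supply a proof, so there is no internal argument to compare against. Evaluated on its own terms, your plan takes the natural route — specialize the $u$-resultant to $f_0 = u - x_k$, bound its height, and invoke Cauchy's root bound — and this is indeed close in spirit to how the paper derives Eq.~(\ref{eq:upper-lower-D}) and, further back, to the Macaulay/Hadamard accounting in Canny's original argument.

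However, there is a genuine gap in step two of your plan. You assert that the hypothesis ``finitely many common solutions when homogenized'' guarantees that $\mathcal{R}_k$ is a nonzero polynomial in $u$. That implication is false. The resultant of $F_1,\dots,F_n, ux_0 - x_k$ vanishes identically in $u$ precisely when the projective variety $V(F_1,\dots,F_n)$ meets the codimension-two linear subspace $\{x_0 = x_k = 0\}$, and a \emph{finite} projective variety can perfectly well contain such a point. For a concrete instance take $n=2$, $f_1 = x_1^2 - 1$, $f_2 = x_1 x_2 - 1$: the homogenized system $F_1 = x_1^2 - x_0^2$, $F_2 = x_1 x_2 - x_0^2$ has the finite solution set $\{(1\!:\!1\!:\!1),\,(1\!:\!-1\!:\!-1),\,(0\!:\!0\!:\!1)\}$ in $\mathbb{P}^2$, yet the point $(0\!:\!0\!:\!1)$ lies in $\{x_0=x_1=0\}$, so $\mathcal{R}_1 \equiv 0$ and your Cauchy step has nothing to bite on, even though the affine zeros $(\pm 1, \pm 1)$ are exactly what the theorem is supposed to bound. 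So your ``secondary subtle point'' at the end identifies the right danger, but attributes it to the wrong hypothesis: the degeneracy can occur \emph{even when} the projective variety is finite.

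The fix is exactly the device the paper invokes later for the positive-dimensional case: perturb and take the lowest-degree nonvanishing coefficient in the perturbation parameter, i.e.\ Canny's Generalized Characteristic Polynomial~\cite{canny-gcp-1990} (or its toric variant~\cite{de-cm-2001}, as in Theorem~\ref{th:DMM-n-pos}). Replacing $\mathcal{R}_k$ by the first nonzero coefficient of the perturbed resultant in $s$ yields a nontrivial univariate polynomial whose roots include the $k$-th coordinates of all isolated affine solutions, and the same height bounds (up to the factor $A$ in~(\ref{eq:all-def})) go through. With that substitution your remaining steps are sound; the only other caveat is, as you already flagged, that getting the exact base $3dc$ rather than a looser polynomial in $d$ and $c$ requires Hadamard's inequality applied directly to the Macaulay matrix rather than the Sombra-style lattice-point estimates used elsewhere in this paper.
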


Let $\bitsize{f_i} =\tau$, then this becomes
$2^{(\lg{3} + \lg{d} +\tau)nd^{n}}$, which is worse than the bound in
Eq.~(\ref{eq:upper-lower-dt}), by a factor of $O(d^{n-1})$.
In \cite{by-issac-2009}, they only require that the system has a 0-dimensional
projection; $m$ is the number of polynomials and $b < n$ the
dimension of the prime component where the 0-dimensional projection
is considered.
The bound is:
\begin{displaymath}
  |\gamma_{ij}| \geq 
  ( (n+1)^2 \, e^{n+2})^{-n(n+1) d^n}
  (b^{n-b-1} \, m \, 2^{\tau})^{-(n-b)d^{n-b-1}} ,
\end{displaymath}
This is similar to ours in (\ref{eq:upper-lower-D}), and we make a comparison in the sequel. 
Moreover, Cor.\ \ref{CorMVs} does not depend on the (total) degree of the equations,
but rather on mixed volume, which is advantageous for sparse systems.

A natural question is how close are the bounds to optimum.
Let us consider the following system \cite{c-crmp-87}:
\begin{displaymath} 
    2^{\tau} x_1^2 = x_1, \, x_j  = x_{j-1}^d, \, 2 \leq j \leq n .
\end{displaymath}
The roots are $x_j = \left( 2^{-\tau}\right)^{d^{j-1}}$, for $2^{\tau} \gg 1$. 
Th.~\ref{th:DMM-n} implies $x_j \geq 2^{-d^n - n(\tau + n\lg{d} + 1)d^{n-1}}$,
which, if $\tau \gg d$, is off only by a factor of $2^{n}$ asymptotically.
The negative exponent of our bound is $\OO(n (n\lg{d} + \tau)  d^{n-1})$,
Canny's bound gives a negative exponent of 
${(\lg{3} + \lg{d} +\tau)nd^{n}} = \OO(n \tau d^{n})$.
The bound from \cite{by-issac-2009}
\if 0
$( (n+1)^2 \, e^{n+2})^{-n(n+1) d^n} (n \, 2^{\tau})^{-nd^{n-1}} >$
$2^{-{2n(n+1)d^n \lg(n+1) - n(n+1)(n+2)d^n - n d^{n-1} \lg{n} - n \tau d^{n-1}}} =\Omega(2^{-n^3d^n-n\tau d^{n-1}})$.
\fi
has negative exponent:
$n(n+1)(2\lg(n+1)+n+2)d^n$ $+ n (\lg{n} + \tau) d^{n-1}$
$= \OO(n^3 d^n + n\tau d^{n-1})$.
%


We now consider the case that $(\Sigma)$ is not 0-dimensional.
Then, the bounds of Th.~\ref{th:DMM-n} do not hold
because they are based on bounding the
infinite norm of the $u-$resultant, which is identically zero.
Specifically, the (sparse) resultant vanishes identically when 
the specialized coefficients of the polynomials are not generic enough, 
i.e.\ the variety has positive dimension,
or, simply, if the variety has a component of positive dimension 
at infinity, known as excess component.

To overcome the latter, Canny introduced the Generalized Characteristic
Polynomial (GCP) \cite{canny-gcp-1990} for dense systems.
We use its generalization, called Toric GCP (TGCP) \cite{de-cm-2001}.
We consider $(\Sigma_0)$ in~(\ref{eq:over-system}) and perturb it:
\begin{displaymath}
  (\widetilde \Sigma_0) \quad 
  \left \{      \begin{array}{lr}
        {\widetilde f_0} = f_0 = 0 , \\
        {\widetilde f_i} = f_i  + p_i = 0, & 1 \leq i \leq n ,
      \end{array}
    \right.
  \end{displaymath}
where $p_i = \sum_{\bold{a} \in \mathcal{D}_i} s^{\omega_i(\bold{a})} \bold{x}^{\bold{a}}$,
$\omega_i( \cdot)$ are (suitable) linear forms, $s$ a new parameter, and 
$\mathcal{D}_i$ is the subset of vertices in $Q_i$ corresponding to
monomials of $f_i$ on the diagonal of some sparse resultant matrix;
at worst, $\mathcal{D}_i$ contains the vertices of $Q_i$.
This perturbation does not alter the support of
the polynomials nor the mixed volume of the system. 

The TGCP is the sparse resultant of $(\widetilde \Sigma_0)$, denoted
$T\in$ $(\ZZ[\bold{c}, \bold{r}])[u,s]$, where $\bold{c}$ corresponds
to the coefficients of $f_i$ and $\bold{r}$ to the coefficients of
$f_0$. The lowest-degree nonzero coefficient of $T$, seen as
univariate polynomial in $s$, is a projection operator: 
it vanishes on the projection of any 0-dimensional component of the algebraic set
defined by $(\Sigma_0)$. We call this $T_U \in \ZZ([\bold{c}, \bold{r}])[u]$, 
and $\dg{T_U} \leq \M_0$.
The roots of $T_U$ are the isolated points of the variety
plus some points embedded in its positive-dimensional components.
It remains to bound the coefficients of $T_U$.
Repeating the construction of $U$ in Eq.~(\ref{eq:U-monom}), we get
\begin{displaymath}
  T_U = \dots + 
  \underbrace{
  {\varrho}_k \, 
  u^k \bold{r}_k^{\M_0-k} \, 
  {\widetilde {\bold{c}}_{1,k}}^{\M_{1}} \,
  {\widetilde {\bold{c}}_{2,k}}^{\M_{2}} \,
  \cdots 
  {\widetilde {\bold{c}}_{n,k}}^{\M_{n}}
  }_{t_k}
  \,
  + \dots ,
\end{displaymath}
where $\rho_k\in\ZZ$, and ${\widetilde {\bold{c}}_{i,k}}^{\M_{i}}$
is a monomial in the coefficients $c_{ij}, s$, of total degree $\M_i$.
It is an overestimation, wrt the height of $T$,
if we suppose that ${\widetilde {\bold{c}}_{i,k}}$ 
is obtained by adding $s^{\lambda}$ to each coefficient of $\bold{c}_{i,k}$,
where $\lambda = \max_{i, \bold{a}} \{ \omega_i( \bold{a})\}$.
If we expand ${\widetilde {\bold{c}}_{i,k}}^{\M_{i}}$, 
the absolute value of the coefficients of $s$ is bounded by 
${\M_i \choose \M_i/2} \norm{f_i}_{\infty}^{\M_i} \leq
2^{\M_i} \norm{f_i}_{\infty}^{\M_i} / \sqrt{\M_i}$.
If we expand the term $t_k$ of $T$, the degree of $s$ is bounded by
$\lambda \cdot \prod_{i=1}^n{\M_i}$, and the coefficients are bounded by
\begin{eqnarray*}
  \prod_{i=1}^{n}{\M_i} \cdot 
  |\varrho_k| \cdot |\bold{r}_k|^{\M_0-k} \cdot 
  \prod_{i=1}^{n}{ 2^{\M_i} \norm{f_i}_{\infty}^{\M_i} / \sqrt{\M_i}}
  = \\
  |\varrho_k| \cdot |\bold{r}_k^{\M_0-k}| \cdot 
  \prod_{i=1}^{n}{ \sqrt{ \M_i} \cdot 2^{\M_i} \cdot \norm{f_i}_{\infty}^{\M_i}}
  = |\bold{r}_k|^{\M_0-k} \, h \, A \, C ,
\end{eqnarray*}
since every factor  ${\widetilde {\bold{c}}_{i,k}}^{\M_{i}}$, 
contributes at most $\M_i$ coefficients. The bound holds for (the absolute
of) all the coefficients of $T$ if we consider it as  bivariate polynomial in $s, u$.
Recall that $|\varrho_k| \leq h$, for all $k$,
where $h$ is defined in Eq.~(\ref{eq:all-def}).
This expression also defines $A, C$.

Now $k \leq \M_0$. 
If we consider $T_U$ as a univariate polynomial
in $s$, then its coefficients are univariate polynomials in
$u$, with degree $\le \M_0$.
For the 2-norm of $T_U$, we use a summation as in the 0-dimensional case,
and get  
\begin{displaymath}
  \norm{T_U}_\infty \leq \norm{ T_U}_{2} \leq 2 \,h \,A \, C \, B^{(n-1)\M_0} .
\end{displaymath}

The previous bound is the one on $U$ multiplied by $A$. 
Thus we can provide a theorem extending Th.~\ref{th:DMM-n} to positive-dimensional systems, 
by replacing $C$ by $AC$, in Th.~\ref{th:DMM-n},
\begin{framed}
  \vspace{-5pt}
  \begin{theorem}[$\texttt{DMM}_n$ with excess components]
    \label{th:DMM-n-pos}
    Consider the polynomial system $(\Sigma)$ in
    (\ref{eq:system}), 
    which is not necessarily 0-dimensional,
    and where it holds that  $f_i \in \ZZ[ \bvec{x}]$,
    $\dg{f_i} \leq d$,
    and $\bitsize{f_i} \leq \tau.$
    Let $D$ be the number of the isolated points of the solution set in
    $(\CC^*)^n$, which are $0 < | \gamma_1 | \leq |\gamma_2| \leq
    \dots \leq |\gamma_D|$.  Let $\Omega$ be any set of $\ell$ couples
    of indices $(i, j)$ such that $1 \leq i < j \leq D$,
    and  $\gamma_{j,k}$ stands for the $k$-th coordinate of  $\gamma_j$.
    Then the following holds
      \begin{equation*}
        (2^{\M_0+1} \, \varrho \, C \, A)^{\ell} 
            \geq
            \prod_{(i, j) \in \Omega}{ \Abs{ \gamma_i - \gamma_j }}
           \geq
          2^{-\ell - (\M_0-1)(\M_0+2)/2} \, (h \,C \, A)^{1-\M_0-\ell} \, B^{(1-n)(\M_0^2 + \M_0(\ell-1) + \ell)},
        \label{eq:dmm-D-pos}
      \end{equation*}
    \begin{equation}
      (2^{\M_0} \, \varrho \, C \, A)^{-1}
      \leq  |\gamma_{j,k} |  \leq 
      2^{\M_0} \, \varrho \, C \, A,  \label{eq:upper-lower-D-pos}
    \end{equation}  

    \begin{equation}
      \sep( \Sigma) 
      \geq 
      2^{-(3\M_0+2)(\M_0-1)/2} \,( \sqrt{\M_0+1} \, \varrho \, C \, A)^{-\M_0}, \label{eq:sep-D-pos}
    \end{equation}
    We also have the following, less accurate bounds:
    \begin{equation}
      \prod_{(i, j) \in \Omega}{ \Abs{ \gamma_i - \gamma_j }}
      \geq 
      2^{-(n^2-n)d^n\lg{\sqrt{d}} -(3+4\lg{n}+4n\lg{d})d^{2n}}
      \, \cdot \,2^{-2n(2+n\lg{d}+\tau)d^{2n-1}},
      \label{eq:dmm-dt-pos}
    \end{equation}    
    \begin{equation}
      2^{(n^2-n)\lg{\sqrt{d}} -d^n - n(\tau + n\lg{d} + 2)d^{n-1}}
      \leq |\gamma_{j,k} | \leq 
       2^{(n^2-n)\lg{\sqrt{d}} +d^n + n(\tau + n\lg{d} + 2)d^{n-1}}, 
      \label{eq:upper-lower-dt-pos}
    \end{equation}
    \begin{equation}
      \mathsf{sep}( \Sigma) \geq 
      2^{-(n^2-n)d^n\lg{\sqrt{d}} -2d^{2n} -n(2n\lg{d} + \tau+1)d^{2n-1}}.
      \label{eq:sep-dt-pos}
    \end{equation}
  \end{theorem}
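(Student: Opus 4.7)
The plan is to transpose the proof of Theorem \ref{th:DMM-n} almost verbatim, substituting the univariate projection $T_U$ of the TGCP for the $u$-resultant $U$, and substituting the degree bound $\M_0$ for the number of isolated roots $D = \deg U$ of the $0$-dimensional case. The essential technical input—the norm bound $\norm{T_U}_2 \le 2\,h\,A\,C\,B^{(n-1)\M_0}$—is already established in the paragraphs preceding the statement, and differs from the corresponding bound on $U$ only by the extra factor $A$ introduced by the $s$-perturbation. Everything else in the proof goes through after this single substitution.

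First I would observe that the separating-element machinery (Proposition \ref{prop:sep-element} and Corollary \ref{cor:sep-elem-bitsize}) still applies, since the set of \emph{isolated} solutions of $(\Sigma)$ in $(\CC^*)^n$ remains finite of cardinality $D \le \M_0$; thus an $f_0 \in \mathcal{F}$ can still be chosen so that $\gamma \mapsto f_0(\gamma)$ is injective on these isolated points. By the defining property of the TGCP—its lowest nonzero $s$-coefficient is a projection operator, vanishing on the $f_0$-image of every $0$-dimensional component of the variety of $(\Sigma_0)$—each isolated solution $\gamma_j$ produces a root $u_j = -\sum r_i \gamma_{j,i}$ of $T_U$, possibly together with spurious roots coming from embedded points in positive-dimensional components (which only strengthens a lower bound on the product $\prod \abs{u_i - u_j}$ taken over indices in $\Omega$).

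For the lower bound in (\ref{eq:dmm-D-pos}), I would apply $\dmm_1$ (Theorem \ref{th:DMM-1}) to $T_U$ with the degree bound $\M_0$ and the norm bound above, then couple to $\prod \abs{\gamma_i - \gamma_j}$ via Cauchy--Bunyakovsky--Schwartz (Proposition \ref{prop:cbs-ineq}) exactly as in (\ref{eq:gamma-u})--(\ref{eq:u-bound}); the discriminant factor is $\ge 1$ and dropped since the coefficients are integral. The upper bound in (\ref{eq:dmm-D-pos}) then follows from (\ref{eq:a-b-ineq}) combined with the coordinate bound (\ref{eq:upper-lower-D-pos}). For (\ref{eq:upper-lower-D-pos}) and (\ref{eq:sep-D-pos}), I would specialize $f_0 = u - x_k$ and form the univariate projection $\widetilde{\mathcal{R}}_k$ of the corresponding TGCP, which plays the role of $\mathcal{R}_1$ in the $0$-dimensional proof; the same coefficient estimate gives $\norm{\widetilde{\mathcal{R}}_k}_\infty \le 2^{\M_0}\,\varrho\,C\,A$, whence (\ref{eq:upper-lower-D-pos}) is immediate from Cauchy's classical root bound, and (\ref{eq:sep-D-pos}) follows by applying $\dmm_1$ with $\ell = 1$ to $\widetilde{\mathcal{R}}_k$ on the coordinate $k$ realizing the separation. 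The less accurate forms (\ref{eq:dmm-dt-pos})--(\ref{eq:sep-dt-pos}) are pure arithmetic, obtained by substituting the uniform estimates of (\ref{eq:all-def}) together with $\lg A \le n d^{n-1} + \tfrac{n^2-n}{2}\lg d$.

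The main obstacle, in my view, is not the algebraic bookkeeping but the geometric verification that the projection property of $T_U$ is strong enough, namely that every isolated point of the variety of $(\Sigma)$ in $(\CC^*)^n$ really does project to a root of $T_U$. Once this is accepted from the TGCP construction of \cite{canny-gcp-1990,de-cm-2001}, the remainder is a careful propagation of the extra factor $A$ through the argument of Theorem \ref{th:DMM-n}.
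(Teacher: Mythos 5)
Your proposal is correct and follows essentially the same route as the paper, which derives the norm bound $\norm{T_U}_2\le 2\,h\,A\,C\,B^{(n-1)\M_0}$ and then simply states that the positive-dimensional theorem follows from Theorem~\ref{th:DMM-n} ``by replacing $C$ by $AC$.'' You are in fact more explicit than the paper about the details (specializing $f_0 = u - x_k$ for the coordinate and separation bounds, and noting why extra roots of $T_U$ coming from embedded points are harmless when applying $\dmm_1$ over a subset $\Omega$ restricted to pairs of isolated points), but the underlying argument is identical.
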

\end{framed}

\section{Applications}
\label{sec:applications}

We illustrate the bounds of Th.~\ref{th:DMM-n} in two applications.
The first concerns matrix eigenvalues and eigenvectors,
and is a standard illustration of the superiority of mixed volumes
against B\'ezout's bound.
The second is lower bounds of positive multivariate
polynomials, inspired by \cite{bsr-arxix-2009}.

{\bf Eigenvalues and eigenvectors.}
Consider an $n \times n$ integer matrix $A$, with elements $\leq 2^{\tau}$. 
We are interested in its eigenvalues $\lambda$,
and its eigenvectors $\bvec{v} = (v_1, \dots, v_n)^{\top}$.
This is equivalent to solving
$f_j = \sum_{j=1}^{n}{a_{i,j} v_j} - \lambda v_i$,
$1 \leq i \leq n$, $1 \leq j \leq n$, and $f_{n+1} = \sum_{i=1}^{n}{v_i^2} - 1$.
We have $\norm{f_j}_{\infty} \leq 2^{\tau}$,
$\norm{f_{n+1}}_{\infty} \leq 2$.
The B\'ezout bound is $2^{n+1}$, whereas
the actual number of (complex) solutions is $2n$, 
which equals the mixed volume, e.g.~\cite{EmiPhd}.

Canny's Gap theorem \cite{c-crmp-87} implies
$|z| > (6 \cdot 2^{\tau})^{-(n+1)2^n}$,
for any eigenvalue or eigenvector element $z\ne 0$.
Thus, 
we need $\OO( n \, \tau \, 2^{n})$ bits. 
We get the same exponential behavior in $n$ if we apply
\cite{Yap2000} or \cite{by-issac-2009}.

It is reasonable to assume that the system is 0-dimensional and
apply (\ref{eq:upper-lower-D}) of Th.~\ref{th:DMM-n}.
It holds that $\M_j = 2n$, $\M_{n+1} = n$,
$(\#Q_{n+1}) \leq 2^{n+2}$, and $(\#Q_i) \leq 2^{n+2}$
where $1 \leq j \leq n$, and 
$
  C = 
  \norm{ f_{n+1}}_{\infty}^{\M_{n+1}} \, \prod_{j=1}^{n}{ \norm{f_j}_{\infty}^{\M_{j}}} 
  \leq 2^{\tau \sum_{j=1}^{n}{\M_j}} \ 2^{n} = 2^{2 n^2 \tau + n},
$
$
\varrho \leq \prod_{i=1}^{n+1}{(\#Q_i)^{\M_i}} 
\leq (\#Q_{n+1})^{\M_{n+1}} \, \prod_{i=1}^{n}{(\#Q_i)^{\M_i}}$;
hence 
$\varrho \leq
(2^{n+2})^{n} \, \prod_{i=1}^{n}(2^{n+2})^{2n} \leq 2^{2n^3 + 5n^2+ 2n}.
$

The solutions lie 
in $\CC^{n+1}$.
The lower bound of Th.~\ref{th:DMM-n} yields
\begin{displaymath}
|z| > 2^{-2n^3 - 5n^2 - 5 - 2n^2\tau}, 
\end{displaymath}
where $z$ is an eigenvalues or element of eigenvector.
This is exponentially better than the previous bounds.  
Eq.~(\ref{eq:sep-D}) from Th.~\ref{th:DMM-n} bounds the system's separation bound:
$-\lg( \sep( \Sigma)) \leq 
  4 n^3 \tau + n\lg{n} + 4\,{n}^{4}+10\,{n}^{3}+12\,{n}^{2}+n-1 = \OO( n^4 + n^3 \tau)$. 
This is polynomial in the size of the input, and hence we obtain a new proof
of Bareiss' result \cite{bareiss-moc-1968}, that computing the
eigenvalues and eigenvectors of an integer matrix is a polynomial problem.

{\bf Positive multivariate polynomials.}
We consider the following problem, studied in \cite{bsr-arxix-2009}.
Let $P \in \ZZ[x_1, \dots, x_n]$ be a multivariate polynomial of degree $d$
which on the $n$-dimensional simplex
takes only positive values.
We are interested in computing a bound on its {\em minimum value}, $m$.
We may assume that the minimum is attained inside the simplex; if not,
apply a transformation which slightly changes the bitsize of $P$ \cite{bsr-arxix-2009}.
Let $\tau$ bound the bitsize of the coefficients of $P$.
We wish to find compute a lower bound on $m$,
 greater than zero, depending on $n, d, \tau$.
Equivalently, we have a system with unknowns $m, x_i$:

\begin{equation}
  \left\{
  \begin{array}{l}
    \frac{\partial P}{\partial x_1}(x_1, \dots, x_n) 
    = \cdots = \frac{\partial P}{\partial x_n}(x_1, \dots, x_n) = 0, \\
    P(x_1, \dots, x_n) = m.
  \end{array}
  \right.
  \label{eq:pos-poly-system}
\end{equation}
We use Th.~\ref{th:DMM-n-pos}, since there is no guarantee that the system is 
0-dimensional. However, Th.~\ref{th:DMM-n-pos} provides bounds for the
isolated points of the variety. Since the  minimum could be attained on a non-zero
dimensional component, we should argue that the bounds take care
of this case.
We consider all the irreducible components of the variety defined by
\eqref{eq:pos-poly-system}.
Each of them contains a point for which the bounds of
Th.~\ref{th:DMM-n-pos} apply. 
Such a point is the limit of a solution
of the perturbed system depending on the parameter $s$ when
$s\rightarrow 0$. Moreover, it is a zero of the first non-zero
coefficient $T_{U}$, seen as a polynomial in $s$
\cite{canny-gcp-1990,de-cm-2001};
Th.~\ref{th:DMM-n-pos} bounds these zeros.
Now, on each of these components, the value of $m$ is constant,
since the gradient of $P$ is~0, and so the bounds apply for it as well.

Let $P_i =\frac{\partial P}{\partial x_i}$ and $P_{n+1} = P -m$.
It holds that
$\deg(P_{n+1}) = d$, $\deg( P_i) \leq d-1$,
$\norm{P_{n+1}}_{\infty} \leq 2^{\tau}$, 
$\norm{P_i}_{\infty} \leq d \norm{f_{n+1}}_{\infty} \leq d\,2^{\tau}$,
$\M_{n+1} \leq (d-1)^n$,
$\M_{i} \leq d (d-1)^{n-1}$,
and $D \leq \M_0 \leq d (d-1)^n$.
Using (\ref{eq:upper-lower-D-pos}) we deduce $1/m \leq 2^{D} \, \varrho\,C\,A$.
It remains to bound the various quantities involved, defined in
(\ref{eq:all-def}):
{
\begin{displaymath}
  \begin{aligned}
    C & \leq \prod_{i=1}^{n+1}{ \norm{P_i}_{\infty}^{\M_i}}
    = \norm{P_{n+1}}_{\infty}^{\M_{n+1}} \prod_{i=1}^{n+1}{\norm{P_i}_{\infty}^{\M_i}} \\
    & \leq \Paren{2^{\tau}}^{(d-1)^n} \prod_{i=1}^{n}{ \Paren{d\,2^{\tau}}^{d(d-1)^{n-1}}} 
    \leq 2^{\tau (d-1)^n}  (d\,2^{\tau})^{ nd(d-1)^{n-1}} \\  
    & \leq  2^{ (n+1) \tau  d (d-1)^{n-1}  + n d (d-1)^{n-1}\lg{d}  },
\end{aligned}
\end{displaymath}
}
{
\begin{displaymath}
  \begin{aligned}
    A & = \prod_{i=1}^{n+1} \sqrt{\M_i} \, 2^{\M_i} =
    \sqrt{\M_{n+1}} \cdot 2^{\M_{n+1}} \cdot \prod_{i=1}^{n} \sqrt{\M_i} \cdot 2^{\M_i} \\
    & \leq (d-1)^{n/2} \cdot 2^{(d-1)^n} \cdot d^{n/2} (d-1)^{n(n-1)/2} \cdot 2^{nd(d-1)^{n-1}} \\
    & \leq 2^{(n+1)d(d-1)^{n-1} + (n^2+n)\lg{\sqrt{d}}}.
  \end{aligned}
\end{displaymath}
}
Moreover,
$(\#Q_{n+1}) \leq 2d^{n+1}$, $(\#Q_i) \leq 2(d-1)^{n+1}$, and so 
{
\begin{displaymath}
  \begin{aligned}
    \varrho & = \prod_{i=1}^{n+1}{(\# Q_i)^{M_i}} = {(\# Q_{n+1})^{M_{n+1}}} \prod_{i=1}^{n}{(\# Q_i)^{M_i}}  \\
    & \leq (2 d^{n+1})^{(d-1)^n} \cdot \prod_{i=1}^{n}{(2d^n)^{d(d-1)^{n-1}}}
     \leq 2 ^{(n+1)(1+(n+1)\lg{d})d (d-1)^{n-1}}
  \end{aligned}
\end{displaymath}
}
We apply (\ref{eq:upper-lower-dt}) using the previous
inequalities, and get 

\begin{displaymath}
  \frac{1}{m} \leq 
  2^{(n^2+ n)\lg{\sqrt{d}} +  (1+2n + d + (n^2+3n+1)\lg{d})d(d-1)^{n-1}} 
  \cdot  \, 2^{(n+1)\tau d(d-1)^{n-1}} .
\end{displaymath}
To assure that the minimum is attained inside the simplex, we apply a
transformation that preserves the degree, but the bitsize of the polynomial is
now bounded by $\tau +1 + d \lg(n)$. Replacing this in the previous inequality,
we get $ \frac{1}{m} \leq \frac{1}{m_{\dmm p}}$, where
\begin{equation}
  \frac{1}{m_{\dmm p}}=
  2^{(n^2+ n)\lg{\sqrt{d}} + (2+ 3n + d + (n^2+3n+1)\lg{d}}  
  \cdot \, 2^{(n+1)d\lg{n})d(d-1)^{n-1}} \cdot \, 2^{ (n+1)\tau d(d-1)^{n-1}} .
  \label{eq:pos-poly-bound}
\end{equation}

If we know that the system is zero dimensional then we could use
Th.~\ref{th:DMM-n}. Of course this is not always the case, hence we
state the following bound, using (\ref{eq:upper-lower-D}), just as a reference.
\begin{equation}
  \frac{1}{m} \leq \frac{1}{m_{\dmm}}=
  2^{((n+1)\tau + n+d+(n^2+3n+1) \lg{d} )d(d-1)^{n-1}}.
  \label{eq:pos-poly-bound-0}  
\end{equation}

Let us compare the $m_{\dmm_{p}}$ with other bounds that appear in the bibliography.
In \cite[Sec.~2, Rem.~2.17]{bsr-arxix-2009}, the following estimation was
computed, 
\begin{equation}
  \begin{aligned}
    \frac{1}{m_{\texttt{BLR}}} & =
    2^{2^{n+3} n d^{n+1} (\tau + 8 nd)} n^{2^{n+5}d^{n+2}n}d^{2^{n+5}d^{n+1}n^{2}} \\
    & = 2^{ 2^{n+3} n \tau d^{n+1}+
      2^{n+5} n d^{n+1} (2 nd + d\lg{n} + n\lg{d}) } .
    \label{eq:blp-pos-bound}
  \end{aligned}
\end{equation}
which also holds with no assumption, but it is looser than
$m_{\dmm_{p}}$.  

In \cite{by-issac-2009} the authors derive a bound for the minimum of
the absolute value of a polynomial, 
$\frac{1}{m} \leq \frac{1}{m_{\texttt{BY}}}$,
i.e.
\begin{equation}
  \frac{1}{m_{\texttt{BY}}}  =
  ((n+2)^2 e^{n+3})^{(n+1)(n+2)d^{n+1}} \, (n^n (n+1)\,d \,2^{\tau})^{(n+1)d^n}.
  \label{eq:by-eval-bound}
\end{equation}
The authors use the terminology {\em evaluation bound} for their
bound. It holds when there is a 0-dimensional projection;
they prove that this is always the case for~(\ref{eq:pos-poly-system}).

In \cite{jp-arxiv-2009} the following bound was computed:
\begin{equation}
  \frac{1}{m} \leq 
  \frac{1}{m_{\texttt{JP}}}  = 2^{(\tau+1)  d^{n+1}} d^{(n+1) d^{n+1}} ,
  \label{eq:pg-pos-bound}
\end{equation}
which has no restriction on the corresponding polynomial system.
It is comparable to $m_{\dmm_{p}}$ in general, but strictly looser 
when $d > n$. 

\begin{example}
  Let us compute a lower bound on the value of
  $f = (x +2y -3)^d + (x+ 2y -4)^d$,
  where $d \in \set{2,8,32}$. The polynomial is positive as it is a
  sum of squares. 
  Consider the ideal $I=(f-z, f_x, f_y) \subset \ZZ[x,y,z]$.
  If $(\zeta_1, \zeta_2, \zeta_3)$ belongs to the zero set of $I_f$,
  then $|\zeta_3| \geq 2^{-b}$, $b>0$.
  In Tab.~\ref{tab:pos-poly}
  we present the estimations of $\lg{b}$ by the previous bounds.
  The true value is $b = 0$.
  When the degree is comparable to the number of variables ($d=2$), then
  our bound and $m_{\texttt{JP}}$ are comparable.
  When $d > n$, e.g.~$d=4$ and $d=32$, 
  then $m_{\dmm p}$  is better than $m_{\texttt{JP}}$ by an order of
  magnitude.
\end{example}

\section{Subdivision algorithms}
\label{sec:subdiv}

We use our results to bound the number of steps that any subdivision algorithm
performs to isolate the real roots of a well-defined polynomial system.
Then, we bound the complexity of Milne's algorithm in 2d.
Our analysis can easily be extended to $\RR^n$, however it is not clear what
is the exact bit complexity of the elimination steps needed.


We use $\dmm_n$, Th.~\ref{th:DMM-n}, and Eq.~(\ref{eq:dmm-D}) \& (\ref{eq:dmm-dt}),
to bound the number of steps of a subdivision algorithm
to isolate the real roots of a well-defined polynomial system as in~(\ref{eq:polys}).
We assume the existence of an oracle that counts the number of real roots
of the system inside a box in $\QQ^{n}$.
Our aim is to compute the number of calls to the oracle in order to
compute isolating (hyper-)boxes for all real roots.
Realizations of such oracles are in \cite{Miln92,PeRoSz93,p-crz-91}, see also \cite{BPR06}.

Suppose all roots of the system lie in a hypercube of side $C$, see Th.~\ref{th:DMM-n}. 
At step $h$ of the algorithm, the oracle counts the number of roots 
in hypercubes of side $C/2^{h}$.  
We consider the whole subdivision algorithm as a $2^{n}-$ary tree $T$,
where at each node we associate a hypercube, and to the root of the
tree we associate the initial hypercube. 
Let $\#(T)$ denote the number of nodes. We will prune some leaves of $T$ to
obtain tree $T'$ where it is easier to count its nodes.

We proceed as follows.
If $v$ is a leaf and has a sibling that it is not a leaf, then we prune $v$.
If $u_1, \dots, u_k$, for some positive integer $k$,
are leaves and siblings, such that they have no
sibling that is not a leave, then we prune all of them except one that
possess a hypercube that contains a real root. 
Notice that there is always at least one such node in
$u_1, \dots, u_k$, because otherwise, the subdivision process in this
path would have stopped one level before. 
If there exists more than one such node in $u_1, \dots, u_k$, then we keep
arbitrarily one of them. 
It holds that $\#(T) \leq 2^{n} \#(T')$, and we will count the nodes in $T'$.

Each leaf of the tree contains contains a hypercube that isolates a
real root of the system, and if there are at most $R$ real roots,
this also bounds the number of the leaves of $T'$.
The hypercubes that correspond to the leaves of the tree have diagonals 
that are at least $\Delta_j = |\gamma_j - \gamma_{c_j}|$,
where $\gamma_{c_j}$ is the root closest to $\gamma_j$.
The length of their edges is at least 
$| \gamma_{j, i} - \gamma_{{c_j}, i} |$, where $1 \leq i \leq n$.
It holds that
$
  \Delta_j = |\gamma_j - \gamma_{c_j}| \geq 
  | \gamma_{j, i} - \gamma_{{c_j}, i} |,
$
for any index $i$.
The number of nodes from a leaf to the root of the tree is
$\Ceil{ \log{ \frac{ C}{\Delta_j}}}$.
Hence the number of nodes in $\#( T')$ is
\begin{equation}
  \label{eq:T-complex}
  \#( T') = \sum_{j=1}^{R}{ \Ceil{ \log{ \frac{ C}{\Delta_j}}}} \leq
  R + R \lg{C} - \lg{ \prod_{j=1}^{R}{\Delta_j}}.
\end{equation}

To bound the various quantities that appear, we will rely on
Eq.~(\ref{eq:all-def}) and Th.~\ref{th:DMM-n}.
If the total degree of the polynomials is bounded by $d$, 
and $\norm{f_i}_{\infty} \leq 2^{\tau}$, 
then $\lg{ C} \leq n \, \tau \, d^{n-1}$.
To bound $\prod_{j=1}^{R}{\Delta_j}$ 
we use Eq.~(\ref{eq:dmm-D}) of Th.~\ref{th:DMM-n} with $\ell = R$.
The hypotheses of the theorem, concerning the indices of the roots,
are not fulfilled when symmetric products occur.
In this case, we factorize quantity as
$\prod_{i=1}^{R}{\Delta_i} = \prod_{i=1}^{R_1}{\Delta_i} \prod_{i=1}^{R_2}{\Delta_i}$,
where $R_1 + R_2 = R$ and the factors are such that no symmetric
products occur. Then
\begin{eqnarray*}
  &\prod_{i=1}^{R}{\Delta_i} = \prod_{i=1}^{R_1}{\Delta_i} \prod_{R=1}^{R_2}{\Delta_i}
  \geq \\
  &2^{-R - (D-1)(D+2)} \, (h\, C)^{2 - 2D - R} \, B^{-(n-1)(2D^2 + D(R+2) + R)}.
\end{eqnarray*}
If we take into account that $R \leq D \leq d^n$, then 
\begin{displaymath}
  \begin{array}{ll}
    -\log{ \prod_{i=1}^{R}{\Delta_i}} &\leq 
    2 D^2 + 3D \lg{C} + 3D \lg{h} + 5 n \,D^2 \lg{B} \\
    &\leq
    8(\lg{n}+n\lg{d})d^{2n} + 3n(n\lg{d}+\tau)d^{2n-1},
  \end{array}
\end{displaymath}
and for the total number of nodes of $T'$ we have
\begin{displaymath}
  \hspace{-0.3cm}
  \begin{array}{rl}
    (\#T')  \leq &
    R + R \lg{C} - \lg{ \prod_{j=1}^{R}{\Delta_j}}  \leq D + D \lg{C}  - \lg{ \prod_{j=1}^{R}{\Delta_j}} \\
    \leq  &
    2 d^n (n \tau d^{n-1})  +8(\lg{n}+n\lg{d})d^{2n} + {3n(n\lg{d}+\tau)d^{2n-1}} \\
     = &   
    \sO( n (n + d + \tau) d^{2n-1}),
  \end{array}
\end{displaymath}
and hence $(\#T) = \sO( 2^{n} \, n (n + d + \tau) \, d^{2n-1})$.
\begin{theorem}
  \label{th:subdiv-steps}
  Consider the polynomial system formed by the polynomials in (\ref{eq:polys}).
  The number of steps that a subdivision algorithm performs in order to compute isolating
  boxes for all the real roots of the system is
  $\sO( 2^{n} \, D \,(D  + \lg{C}))$
  or
  $\sO( 2^{n} \,( d + \tau) \, d^{2n-1})$.
\end{theorem}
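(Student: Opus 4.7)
The plan is to model a run of any such subdivision algorithm as a $2^n$-ary tree $T$ whose root is the initial enclosing hypercube (of side $C$, where $C$ is the upper bound from Th.~\ref{th:DMM-n}, Eq.~(\ref{eq:upper-lower-D})), and whose children at depth $h$ are the $2^n$ sub-hypercubes of side $C/2^h$ obtained by bisection. Each call to the root-counting oracle corresponds to a node. To obtain a clean count, I would first prune $T$ to a smaller tree $T'$: remove every leaf that has a sibling which is not a leaf, and in each maximal group of sibling leaves keep exactly one that contains a real root (such a node must exist, otherwise the parent would already have been declared free of roots). This pruning is conservative, $\#(T)\leq 2^n\cdot\#(T')$, while making every leaf of $T'$ correspond to one isolating box for a distinct real root.

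Next I would bound $\#(T')$ by summing depths. Let $R\leq D$ be the number of real roots, let $\gamma_{c_j}$ be the root nearest to $\gamma_j$, and set $\Delta_j=|\gamma_j-\gamma_{c_j}|$. Since the diagonal of the isolating box of $\gamma_j$ must exceed $\Delta_j$, its depth is at most $\lceil\log(C/\Delta_j)\rceil$, hence
\begin{equation*}
\#(T')\;\leq\;\sum_{j=1}^R\bigl\lceil\log(C/\Delta_j)\bigr\rceil\;\leq\;R+R\lg C-\lg\prod_{j=1}^R\Delta_j.
\end{equation*}
The main work is to lower-bound the product $\prod_j\Delta_j$, and this is precisely where $\dmm_n$ (Th.~\ref{th:DMM-n}) enters, applied to the set $\Omega$ of $R$ index pairs $(j,c_j)$ with $\ell=R$.

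The one technical subtlety is that the pairs $(j,c_j)$ need not satisfy the ordering hypothesis of Th.~\ref{th:DMM-n} (some indices may repeat through ``symmetric'' pairings). I would handle this by splitting $\Omega$ into at most two subsets $\Omega_1,\Omega_2$ of sizes $R_1+R_2=R$ containing no symmetric pairs, applying the lower bound of Eq.~(\ref{eq:dmm-D}) to each, and multiplying; the resulting bound only loses a constant factor of~$2$ in each of the exponents of $h$, $C$ and $B$. Substituting the estimates of Eq.~(\ref{eq:all-def}) for $h$, $C$, $B$, $D$ (and using $R\leq D\leq d^n$) gives
\begin{equation*}
-\lg\!\prod_{j=1}^R\Delta_j\;\leq\;\tilde O\bigl(D^2+D\lg C+D\lg h+nD^2\lg B\bigr),
\end{equation*}
which dominates the $R+R\lg C$ term.

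Finally, I would collect terms: $D\leq d^n$, $\lg C=O(n\tau d^{n-1})$, $\lg h,\lg B=\tilde O(n\lg d)$, yielding $\#(T')=\sO(D(D+\lg C))=\sO((d+\tau)d^{2n-1})$ and hence $\#(T)\leq 2^n\cdot\#(T')$ as claimed. The only part that is not entirely mechanical is the symmetric-pair issue above; once that is dispatched, the rest is bookkeeping with the inequalities of~(\ref{eq:all-def}). No new ingredient beyond Th.~\ref{th:DMM-n} is required.
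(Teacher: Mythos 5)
Your proposal is correct and follows essentially the same route as the paper: the same $2^n$-ary tree model with the same pruning argument giving $\#(T)\leq 2^n\#(T')$, the same depth sum $\#(T')\leq R+R\lg C-\lg\prod_j\Delta_j$, the same split of the pair set into two symmetric-free subsets before applying Eq.~(\ref{eq:dmm-D}) with $\ell=R$, and the same bookkeeping via~(\ref{eq:all-def}). There is nothing to add or correct.
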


\begin{remark}
  If we specialize $n = 1$ in the previous theorem, then we deduce 
  that the number of steps of subdivisions algorithms for real root isolation of
  univariate integer, not necessarily square-free, polynomials 
  is $\OO( d^2 \lg{d} + d \tau)$.
  The optimal bound is $\OO( d^2 + d \tau)$ \cite{Dav:TR:85}.
\end{remark}


We now bound the complexity of Milne's algorithm \cite{Miln92}
for isolating all real roots of a bivariate polynomial system.
Milne's, so-called, {\em volume function} realizes the required oracle,
see \cite{gvt-msturm-1997,z-msc-2009} for experimental results.  
By $\SR(f, g)$ we denote the signed polynomial remainder sequence of $f, g$.

\begin{proposition}
  \label{prop:mSR-comp} 
  \cite{Reischert:subresultant:97,det-jsc-2009}
  We compute $\SRQ( f, g)$, any polynomial in $\SR( f, g)$, 
  and $\res( f, g)$ wrt $x$ in $\sOB( q (p+q)^{k+1} d \tau)$.
  The degree of $\SR( f, g)$ in $y_1, \dots, y_k$ is $\OO( d(p+q))$ and the bitsize is
  $\OO( (p+q)\tau)$.
  We can evaluate $\SR( f, g )$ at $x = \rat{ a}$,
  where $\rat{a} \in \QQ \cup \{ \infty \}$ and $\bitsize{ \rat{ a}} = \sigma$,
  in $\sOB( q (p+q)^{k+1} d \max\{ \tau, \sigma\})$.
\end{proposition}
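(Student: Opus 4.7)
The plan is to reduce to fast univariate subresultant computation and invoke Reischert's bounds. First, view $f, g$ as elements of $\ZZ[y_1, \dots, y_k][x]$ with $\deg_x(f)=p$, $\deg_x(g)=q$, coefficients of degree $\OO(d)$ in each $y_j$ and bitsize $\OO(\tau)$. Each subresultant $\SR_i(f,g)$ is, by definition, a determinant of a Sylvester-type submatrix of size $\OO(p+q)$ whose entries are coefficients of $f$ or $g$. Expanding by Leibniz and using multilinearity immediately gives the degree bound $\OO(d(p+q))$ in every $y_j$, while Hadamard's inequality applied to the signed term-by-term expansion gives the bitsize bound $\OO((p+q)\tau)$, with the logarithmic overhead absorbed by $\sOB$.

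For the arithmetic complexity, I would invoke a half-gcd / fast subresultant algorithm (in the style of \cite{Reischert:subresultant:97}) carried out over the coefficient ring $\ZZ[y_1,\dots,y_k]$. Each ring operation is a multiplication of multivariate polynomials whose sizes are controlled by the bounds above; via Kronecker substitution and FFT-based univariate multiplication, one such product costs $\sOB((d(p+q))^{k}(p+q)\tau)$. The Euclidean recursion on $x$ performs $\OO(q)$ effective multiplications (after the divide-and-conquer collapse), and multiplying through gives the claimed total $\sOB(q(p+q)^{k+1}d\tau)$. The resultant $\res(f,g)$ is the $0$-th subresultant, and the quotients making up $\SRQ(f,g)$ are produced as by-products along the same recursion, within the same budget.

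For the evaluation bound, the cleanest route is to specialize $x=\rat{a}$ upfront in $f$ and $g$, precomputing the powers $\rat{a}^0,\dots,\rat{a}^{p}$. This yields two polynomials in $\ZZ[y_1,\dots,y_k]$ whose coefficient bitsize is $\OO(p\max\{\tau,\sigma\})$ in the worst case, but only $\OO(\max\{\tau,\sigma\})$ after clearing denominators consistently; the preprocessing itself costs $\sOB(p(p+q)^{k}d\max\{\tau,\sigma\})$. Running the same fast subresultant algorithm on the specialized pair then repeats the previous estimate with $\tau$ replaced everywhere by $\max\{\tau,\sigma\}$, giving $\sOB(q(p+q)^{k+1}d\max\{\tau,\sigma\})$.

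The main obstacle is the careful bookkeeping of the exponent $k+1$ on the factor $(p+q)$. One $(p+q)$ comes from the Euclidean recursion in $x$, while the remaining $(p+q)^{k}$ tracks how the degree of the subresultant coefficients in each of the $k$ retained variables $y_1,\dots,y_k$ grows by an additive $\OO(p+q)$ per subresultant step; multiplication in $\ZZ[y_1,\dots,y_k]$ of polynomials of these sizes must still run in nearly linear time through multivariate FFT, which is the content of the cited results. Reconciling our notation with the precise statements in \cite{Reischert:subresultant:97, det-jsc-2009} is routine but delicate, and constitutes the only real work in the proof.
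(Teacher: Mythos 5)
This proposition is cited from \cite{Reischert:subresultant:97,det-jsc-2009}; the paper itself offers no proof, so there is nothing to compare your argument against. Judging your reconstruction on its own merits, there are two genuine gaps.

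First, your complexity bookkeeping does not produce the claimed exponent on $d$. You estimate one multiplication in $\ZZ[y_1,\dots,y_k]$ of the relevant size as $\sOB\bigl((d(p+q))^{k}(p+q)\tau\bigr)$, and then multiply by $\OO(q)$ effective products. That yields $\sOB\bigl(q\,d^{k}(p+q)^{k+1}\tau\bigr)$, which carries an extraneous factor of $d^{k-1}$ relative to the stated bound $\sOB\bigl(q(p+q)^{k+1}d\tau\bigr)$ as soon as $k\ge 2$. (This matters in the paper: the Milne application uses $k=4$ and $k=2$.) A literal half-GCD over the coefficient ring $\ZZ[y_1,\dots,y_k]$ therefore does not reach the quoted bound; the approach in the cited sources goes through evaluation--interpolation in the $y$-variables with a careful count of interpolation points and bitsizes, and the factor you are missing has to be recovered there. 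As written, your argument simply does not land on the stated complexity, so the step ``multiplying through gives the claimed total'' is unjustified.

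Second, your evaluation strategy is not computing the right object. The subresultant sequence $\SR(f,g)$ is taken with respect to $x$, so each $\mathrm{sr}_j$ is a polynomial in $\ZZ[y_1,\dots,y_k][x]$ of degree $\le j$ in $x$, and ``evaluate $\SR(f,g)$ at $x=\rat{a}$'' means substituting $\rat{a}$ for $x$ in those output polynomials. If instead you specialize $x=\rat{a}$ in $f$ and $g$ upfront, the inputs lose the variable $x$ entirely and there is no longer a subresultant sequence ``wrt $x$'' to compute; the subresultant is a polynomial expression in the $x$-coefficients of $f$ and $g$, and collapsing those coefficients by setting $x=\rat{a}$ is a different operation from evaluating the computed sequence at $x=\rat{a}$. (A related, smaller slip: even where specialization is legitimate, clearing denominators of $f(\rat{a},y)$ inflates the coefficient bitsize to $\OO(\tau+p\sigma)$, not $\OO(\max\{\tau,\sigma\})$ as you assert.) The correct route is to compute $\SR(f,g)$ as in the first part and then perform the polynomial evaluations at $x=\rat{a}$ on the output, tracking how $\sigma$ enters the bitsizes; this is what produces $\max\{\tau,\sigma\}$ in place of $\tau$.

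Your degree bound $\OO(d(p+q))$ via the Sylvester-type determinant and the bitsize bound $\OO((p+q)\tau)$ via Hadamard are the standard arguments and are fine.
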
 

Let $f,g \in \ZZ[x, y]$ with total degrees bounded by $d$ and bitsize bounded by
$\tau$.
We are interested in isolating the real roots of the polynomial system 
$f(x, y) = g(x, y) = 0$,
which we assume to be 0-dimensional.
We introduce new parameters $u$, $a$, $b$ and we eliminate $a$, $b$ from the
polynomials $ \set{ f(a, b),  g(a, b), V = u + (x-a)(y-b)}$,
where $V$ is the volume function.
After elimination, we obtain a polynomial $h \in (\ZZ[x, y])[u]$.
We compute the Sturm sequence of $h$ and its derivative w.r.t. $u$,
$h_u$, and we evaluate the sequence over $u=0$. We obtain a sequence
of bivariate polynomials in $x$, $y$.
Now consider a box in the plane.
We evaluate the sequence on each vertex of the box, and we count the number of
sign variations.
The number of real roots inside the box is $\tfrac{1}{4}$ the sum
of the sign variations \cite{Miln92}.  

We perform the elimination using iterated resultants.
Using Prop.~\ref{prop:mSR-comp} we compute 
$h_1 = \res_a( f(a, b), V(u, x, y, a, b)) \in \ZZ[u, x, y, a, b]$
in $\sOB( d^7 \tau)$.
The total degree of $h_1$ is $\OO( d^2)$ and $\bitsize{ h_1} = \sO( d \tau)$.
Similarly, we obtain polynomial
$h_2 = \res_a( g(a, b), V(u, x, y, a, b)) \in \ZZ[u, x, y, a, b]$.
Finally, $h = \res_b( h_1, h_2) \in \ZZ[x, y, u]$
is computed in $\sOB( d^{12} \tau)$.
The degree of $h$ in $u$ is $\OO( d^2)$ since the resultant of $h_1,
h_2$ has the factor $u^{ \dg{ f(x, 0)} \dg{ g( x, 0)}} = u^{d^2}$.
The degree of $h$ in $x, y$ is $\sOB( d^4)$ and $\bitsize{ h} = \sO( d^3 \tau)$.

We compute the signed polynomial remainder sequence of $h, h_u$ and evaluate
it at 0. This costs $\sOB( d^{15} \tau)$.
The evaluated sequence contains $\OO( d^2)$ polynomials in $\ZZ[x, y]$
of degrees $\sOB( d^6)$ and bitsize $\sOB( d^5 \tau)$.  
Each polynomial in the sequence is evaluated over a rational number of
bitsize $\sigma$ in $\sOB( d^{17}( \tau + d \sigma))$,
and thus all of them in $\sOB( d^{19}( \tau + d \sigma))$.
\if 0
The evaluation a bivariate polynomial of total degree $n$ and
  bitsize $L$, over a rational number of bitsize $\sigma$, 
  costs $\sOB(n^2( L + d \sigma))$.
  To see this consider the polynomial in $x$.
  Its $\OO(n)$ coefficients are univariate polynomials in $y$.
  We evaluate each in $\sOB( n( n\sigma + L))$.
So, we have a polynomial in $x$ of degree
  $\OO(n)$ and bitsize $\OO(n \sigma + L)$, to evaluate over a rational of
  bitsize $\sigma$. This costs $\sOB( n(n\sigma + L)$.
\fi

In the worst case, $\sigma$ equals the bitsize of the separation
bound, i.e.\ $\sO(d^3 \tau)$. 
Hence, the evaluation of the sequence costs $\sOB(d^{23} \tau)$.
Th.~\ref{th:subdiv-steps} indicates that we need to perform this
evaluation $\OO( d^4 \lg{d} + d^3 \tau)$ times.

\begin{theorem}
  Let $f,g \in \ZZ[x, y]$ with total degrees bounded by $d$ and bitsize bounded
  by $\tau$.  Using the algorithm of Milne \cite{Miln92}, we can isolate the
  real roots of the system $f = g = 0$ in $\sOB( d^{27} \tau + d^{26} \tau^2)$.
\end{theorem}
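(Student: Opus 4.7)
The plan is to combine the complexity of a single oracle call (evaluating Milne's volume function on the vertices of a box) with the bound on the number of boxes produced by the subdivision process given by Th.~\ref{th:subdiv-steps}. The per-step work is dominated by two ingredients: (i) precomputing, once and for all, the bivariate signed remainder sequence obtained by eliminating the auxiliary variables $a,b$ from $\{f(a,b), g(a,b), u+(x-a)(y-b)\}$ and specialising at $u=0$; and (ii) evaluating this sequence at the rational corners of the boxes visited during subdivision.

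First I would carry out the elimination. Using Prop.~\ref{prop:mSR-comp}, form $h_1=\res_a(f(a,b), u+(x-a)(y-b))$ and $h_2=\res_a(g(a,b), u+(x-a)(y-b))$, each in $\sOB(d^{7}\tau)$, of total degree $\OO(d^2)$ and bitsize $\sO(d\tau)$. Eliminate $b$ by computing $h=\res_b(h_1,h_2)\in\ZZ[x,y,u]$, obtaining $\dg_u h=\OO(d^{2})$, total $(x,y)$-degree $\sO(d^{4})$ and bitsize $\sO(d^{3}\tau)$ in $\sOB(d^{12}\tau)$. Then compute $\SR(h,h_u)$ and evaluate at $u=0$, producing $\OO(d^2)$ bivariate polynomials in $\ZZ[x,y]$ of degree $\sO(d^{6})$ and bitsize $\sO(d^{5}\tau)$, at cost $\sOB(d^{15}\tau)$. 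This cost is absorbed by the evaluation phase.

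Next I would price a single oracle call. Evaluating each polynomial of the sequence at a rational of bitsize $\sigma$ costs $\sOB(d^{17}(\tau+d\sigma))$ by Prop.~\ref{prop:mSR-comp}, so the full sequence costs $\sOB(d^{19}(\tau+d\sigma))$. The worst-case $\sigma$ is the bitsize of the corners of the smallest boxes reached, which is controlled by $-\lg\sep(\Sigma)$; applying Eq.~(\ref{eq:sep-dt}) for $n=2$ gives $\sigma=\sO(d^{3}\tau)$. Substituting yields $\sOB(d^{23}\tau)$ per oracle call. Finally, Th.~\ref{th:subdiv-steps} specialised to $n=2$ bounds the number of calls by $\sO(d^{4}+d^{3}\tau)$. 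Multiplying per-call cost by the number of calls produces
\[
\sOB\!\bigl(d^{23}\tau\cdot(d^{4}+d^{3}\tau)\bigr)=\sOB(d^{27}\tau+d^{26}\tau^{2}),
\]
which is the claimed bound.

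The main obstacle is the careful bookkeeping: ensuring that the degree-in-$u$ factor $u^{d^{2}}$ arising from the classical identity $\res(h_1,h_2)$ containing $u^{\dg f(x,0)\dg g(x,0)}$ is correctly discarded (so that $\dg_u h=\OO(d^2)$, not $\OO(d^4)$), and that the separation bound used to cap $\sigma$ is truly the one for the input system $(f,g)$ rather than for the eliminated univariate projection in $u$. Once these two points are verified the rest is a multiplication of the bounds already tabulated above.
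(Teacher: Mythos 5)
Your proposal follows the paper's proof step for step: the same iterated-resultant elimination producing $h_1,h_2,h$, the same degree and bitsize bookkeeping via Prop.~\ref{prop:mSR-comp}, the same per-evaluation cost $\sOB(d^{19}(\tau+d\sigma))$ capped by $\sigma=\sO(d^3\tau)$ from the separation bound, and the same multiplication by the $\sO(d^4+d^3\tau)$ step count from Th.~\ref{th:subdiv-steps}. The two caveats you flag (discarding the $u^{d^2}$ factor, and using the original system's separation bound for $\sigma$) are indeed the points the paper glosses over, but your handling matches the paper's intended argument.
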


Bounds on mutli-point evaluation of multivariate polynomials
\cite{NusZie-esa-2004m} 
could save at least two factors in the previous theorem.

{\small {\bf Acknowledgment.}
E.T. thanks M. Sombra for finding a missing factor in the original
manuscript, and brought to our attention \cite{sombra-ajm-2004}.
IZE and BM are partially supported by 
Marie-Curie Network ``SAGA'', FP7 contract PITN-GA-2008-214584.
ET is partially supported by an individual postdoctoral grant from the
Danish Agency for Science, Technology and Innovation.
}

{
  \begin{table}[t]
    \centering
    \begin{tabular}{|c|c||r|r|r|}
      \hline
      \multicolumn{2}{|c||}{bound} 
      & $(d, \tau)=(2, 5)$ & $(8,20)$ & $(32,85)$ \\ \hline
      \cite{bsr-arxix-2009}, Eq.~(\ref{eq:blp-pos-bound})  & $|\lg({m_{\texttt{BLR}})|}$
      & $27\,136$  & $6\,684\,672$  & $1\,604\,321\,280$ \\ \hline
      \cite{by-issac-2009}, Eq.~(\ref{eq:by-eval-bound})  & $|\lg({m_{\texttt{BY}})|}$
      &  $1\,192$  &     $74\,000$  & $4\,696\,811$ \\ \hline
      \cite{jp-arxiv-2009}, Eq.~(\ref{eq:pg-pos-bound})   &  $|\lg({m_{\texttt{JP}})|}$ 
      &      $72$  &     $15\,360$  & $3\,309\,568$ \\ \hline
      Eq.(\ref{eq:pos-poly-bound})  &  $|\lg({m_{\texttt{DMM}_{p}})|}$
      &      $87$  &      $7\,457$   & $442\,447$ \\\hline \hline
      Eq.(\ref{eq:pos-poly-bound-0})  &  $|\lg({m_{\texttt{DMM}})|}$
      &      $54$  &      $5\,201$   & $324\,506$ \\\hline
    \end{tabular} 
    \caption{Comparison of (the bitsize of) various bounds on the minimum value of the
      polynomial $f = (x +2y -3)^d + (x+ 2y -4)^d$,
      for $d \in \set{2,8,32}$ and $\tau \in \set{8,20,85}$, resp.
      The bounds hold for all polynomials with same characteristics.}
    \label{tab:pos-poly}
\vspace{-0.3cm}
  \end{table}

\bibliographystyle{plain}
\bibliography{bibDMM} 

\begin{thebibliography}{10}

\bibitem{bareiss-moc-1968}
E.H. Bareiss.
\newblock {Sylvester's identity and multistep integer-preserving Gaussian
  elimination}.
\newblock {\em Math. of Comput.}, 22(103):565--578, 1968.

\bibitem{bsr-arxix-2009}
S.~Basu, R.~Leroy, and M-F. Roy.
\newblock A bound on the minimum of the real positive polynomial over the
  standard simplex.
\newblock Technical Report arXiv:0902.3304v1, arXiv, Feb 2009.

\bibitem{BPR06}
S.~Basu, R.~Pollack, and {M-F.} Roy.
\newblock {\em Algorithms in Real Algebraic Geometry}, volume~10 of {\em
  Algorithms \& Comput. in Math.}
\newblock Springer-Verlag, 2nd edition, 2006.

\bibitem{blichfeldt-tams-1914}
H.~F. Blichfeldt.
\newblock A new principle in the geometry of numbers, with some applications.
\newblock {\em Trans. AMS}, 15(3):227--235, 1914.

\bibitem{by-issac-2009}
W.~D. Brownawell and C.~K. Yap.
\newblock Lower bounds for zero-dimensional projections.
\newblock In {\em Proc.\ ISSAC}, KIAS, Seoul, Korea, 2009.

\bibitem{bcgy-issac-2008}
M.~Burr, S.W. Choi, B.~Galehouse, and C.~K. Yap.
\newblock Complete subdivision algorithms, {II}: Isotopic meshing of singular
  algebraic curves.
\newblock In {\em Proc.\ ISSAC}, pages 87--94, Hagenberg, Austria, 2008.

\bibitem{c-crmp-87}
J.~Canny.
\newblock {\em The Complexity of Robot Motion Planning}.
\newblock ACM Doctoral Dissertation Award Series. MIT Press, 1987.

\bibitem{canny-gcp-1990}
J.~Canny.
\newblock {Generalised characteristic polynomials}.
\newblock {\em J.~Symbolic Computation}, 9(3):241--250, 1990.

\bibitem{CLO2}
D.~Cox, J.~Little, and D.~O'Shea.
\newblock {\em Using Algebraic Geometry}.
\newblock Number 185 in GTM. Springer, New York, 2nd edition, 2005.

\bibitem{de-cm-2001}
C.~D'Andrea and I.Z. Emiris.
\newblock Computing sparse projection operators.
\newblock {\em Contemporary Mathematics}, 286:121--140, 2001.

\bibitem{Dav:TR:85}
J.~H. Davenport.
\newblock Cylindrical algebraic decomposition.
\newblock Technical Report 88--10, School of Math. Sciences, Univ. Bath,
  http://www.bath.ac.uk/masjhd/, 1988.

\bibitem{det-jsc-2009}
D.~I. Diochnos, I.~Z. Emiris, and E.~P. Tsigaridas.
\newblock On the asymptotic and practical complexity of solving bivariate
  systems over the reals.
\newblock {\em J. Symb. Comput.}, 44(7):818--835, 2009.

\bibitem{ESY:descartes}
A.~Eigenwillig, V.~Sharma, and C.~K. Yap.
\newblock {Almost tight recursion tree bounds for the Descartes method}.
\newblock In {\em Proc.\ ISSAC}, pages 71--78, New York, USA, 2006.

\bibitem{EmiPhd}
{I.~Z.} Emiris.
\newblock {\em Sparse Elimination and Applications in Kinematics}.
\newblock PhD thesis, Computer Science Division, Univ.\ of California at
  Berkeley, December 1994.

\bibitem{gvt-msturm-1997}
L.~Gonz\'alez-Vega and G.~Trujillo.
\newblock Multivariate {Sturm-Habicht} sequences: Real root counting on
  n-rectangles and triangles.
\newblock {\em Real Algebraic and Analytic Geometry (Segovia, 1995), Rev. Mat.
  Univ. Complut. Madrid}, 10:119--130, 1997.

\bibitem{jp-arxiv-2009}
G.~Jeronimo and D.~Perrucci.
\newblock On the minimum of a positive polynomial over the standard simplex.
\newblock {\em CoRR}, abs/0906.4377, 2009.

\bibitem{Johnson-phd-91}
J.~R. Johnson.
\newblock {\em {Algorithms for Polynomial Real Root Isolation}}.
\newblock PhD thesis, The Ohio State Univ., 1991.

\bibitem{kps-djm-2001}
T.~Krick, L.M. Pardo, and M.~Sombra.
\newblock {Sharp estimates for the arithmetic Nullstellensatz}.
\newblock {\em Duke Mathematical Journal}, 109(3):521--598, 2001.

\bibitem{MANTZAFLARIS:2009:INRIA-00387399:1}
{A}. {M}antzaflaris, {B}. {M}ourrain, and {E}.{P}. {T}sigaridas.
\newblock {C}ontinued fraction expansion of real roots of polynomial systems.
\newblock In {\em Proc.\ {S}ymbolic-{N}umeric {C}omput.}, pages 85--94,
  {K}yoto, 2009.

\bibitem{Mign91}
M.~Mignotte.
\newblock {\em Mathematics for computer algebra}.
\newblock Springer-Verlag, New York, 1991.

\bibitem{Mignotte:AAECC:95}
M.~Mignotte.
\newblock {On the Distance Between the Roots of a Polynomial.}
\newblock {\em Appl. Algebra Eng. Commun. Comput.}, 6(6):327--332, 1995.

\bibitem{Miln92}
P.~S. Milne.
\newblock On the solution of a set of polynomial equations.
\newblock In B.~Donald, D.~Kapur, and J.~Mundy, editors, {\em Symbolic \&
  Numerical Computation for AI}, pages 89--102. 1992.

\bibitem{NusZie-esa-2004m}
M.~N{\"u}sken and M.~Ziegler.
\newblock Fast multipoint evaluation of bivariate polynomials.
\newblock In S.~Albers and T.~Radzik, editors, {\em ESA}, volume 3221 of {\em
  Lecture Notes in Computer Science}, pages 544--555. Springer, 2004.

\bibitem{p-crz-91}
P.~Pedersen.
\newblock {\em Counting real zeros}.
\newblock PhD thesis, NY Univ., 1991.

\bibitem{PeRoSz93}
P.~Pedersen, M-F.\ Roy, and A.\ Szpirglas.
\newblock Counting real zeros in the multivariate case.
\newblock In F.\ Eyssette and A.\ Galligo, editors, {\em Computational
  Algebraic Geometry}, volume 109 of {\em Progress in Mathematics}, pages
  203--224. Birkh\"{a}user, Boston, 1993.

\bibitem{Reischert:subresultant:97}
D.~Reischert.
\newblock Asymptotically fast computation of subresultants.
\newblock In {\em Proc.\ ISSAC}, pages 233--240, 1997.

\bibitem{Rou:rur:99}
F.~Rouillier.
\newblock Solving zero-dimensional systems through the rational univariate
  representation.
\newblock {\em J. of Appl. Algebra in Engin., Comm. and Computing},
  9(5):433--461, 1999.

\bibitem{sombra-ajm-2004}
M.~Sombra.
\newblock The height of the mixed sparse resultant.
\newblock {\em Amer. J. Math.}, 126:1253--1260, 2004.

\bibitem{te-tcs-2008}
Elias~P. Tsigaridas and Ioannis~Z. Emiris.
\newblock {On the complexity of real root isolation using Continued Fractions}.
\newblock {\em Theor.\ Comput.\ Sci.}, 392:158--173, 2008.

\bibitem{yakoubsohn-joc-2005}
J-C. Yakoubsohn.
\newblock Numerical analysis of a bisection-exclusion method to find zeros of
  univariate analytic functions.
\newblock {\em J. Complexity}, 21(5):652--690, 2005.

\bibitem{Yap2000}
C.~K. Yap.
\newblock {\em Fundamental Problems of Algorithmic Algebra}.
\newblock Oxford University Press, New York, 2000.

\bibitem{z-msc-2009}
Z.~Zafeirakopoulos.
\newblock Study and benchmarks for real root isolation methods.
\newblock Master's thesis, Dept. Informatics \& Telecoms, University of Athens,
  2009.
\newblock www.zafeirakopoulos.info/content/publications/thesis.pdf.

\end{thebibliography}
}

\end{document}